\theoremstyle{plain}
\newtheorem{theorem}{Theorem}[section]
\newtheorem{lemma}[theorem]{Lemma}
\newtheorem{corollary}[theorem]{Corollary}
\theoremstyle{definition}
\newtheorem{definition}[theorem]{Definition}
\newtheorem{example}[theorem]{Example}
\theoremstyle{remark}
\newcommand{\rio}{\emph{rio}}
\newcommand{\head}{\mathsf{head}}
\newcommand{\body}{\mathsf{body}}
\newcommand{\heads}{\mathsf{heads}}
\newcommand{\bodies}{\mathsf{bodies}}
\newcommand{\cn}{\mathit{Cn}}
\newcommand{\out}{\mathit{out}}
\newcommand{\unsat}{\mathsf{unsat}}
\newcommand{\wo}{\mathrm{wo}}
\newcommand{\problem}{{{output}}}
\newcommand{\gout}{\mathit{gout}}
\begin{document}

\articletype{Research Article}

\title{A Reduction of Input/Output Logics to SAT}

\author{
\name{A. Steen\textsuperscript{a}\thanks{CONTACT A. Steen. Email: alexander.steen@uni-greifswald.de}}
\affil{\textsuperscript{a}University of Greifswald, Walther-Rathenau-Straße 47, 17489 Greifswald, Germany}
}

\maketitle

\begin{abstract}
Deontic logics are formalisms for reasoning over norms, obligations, permissions and prohibitions.
Input/Output (I/O) Logics are a particular family of so-called norm-based deontic logics that 
formalize conditional norms outside of the underlying object logic language,
where conditional norms do not carry a truth-value themselves.
In this paper, an automation approach for I/O logics is presented that makes use
of suitable reductions to (sequences of) propositional satisfiability problems.
A prototypical implementation, named
\emph{rio} (\textbf{r}easoner for \textbf{i}nput/\textbf{o}utput logics),
of the proposed procedures is presented and applied to illustrative examples.
\end{abstract}

\begin{keywords}
automated reasoning; normative reasoning; input/output logic; non-monotonic reasoning; reduction to SAT
\end{keywords}

\section{Introduction}
Representation of and reasoning with norms, obligations, permissions and prohibitions 
is addressed by the field of deontic logic~\citep{gabbay2013handbook}. It studies reasoning patterns and logical properties
that are not suitably captured by classical propositional or first-order logic.
Various logic formalisms have been proposed to handle deontic and normative reasoning,
including systems based on modal logics~\citep{wright}, dyadic deontic logic~\citep{gabbay2013handbook},
and norm-based systems~\citep{hansen2014reasoning}.
These systems differ in the properties of the obligation operator, and
in their ability to consistently handle deontic paradoxes and/or norm
conflicts~\citep{gabbay2013handbook}.

Input/Output (I/O) logics~\citep{DBLP:journals/jphil/MakinsonT00} 
are a particular norm-based family of systems in which
conditional norms are represented by pairs of formulas. The pairs do not carry truth-values themselves.
I/O logics use an operational semantics based on detachment and come with a variety of
different systems, formalized by different so-called \emph{output operators}.
Given a set of conditional norms $N$, and a set of formulas describing the situational context $A$,
output operators produce a set of formulas that represent the obligations that are in force for that context.
In order to check whether some state of affairs $\varphi$ is obligatory, it suffices to check whether
$\varphi \in \out(N,A)$, where $\out$ is some output operator. 

Unconstrained I/O logics are monotone
and cannot consistently handle norm conflicts (i.e., situations in which norms with conflicting obligations are in force)
without entailing arbitrary conclusions (including falsehood).
A generalization of I/O logics, denoted \emph{constrained I/O logics}~\citep{DBLP:journals/jphil/MakinsonT01},
addresses these shortcomings and provides a framework for non-monotonic normative reasoning.
There are strong connections to default logics, conditional logics and logics for counterfactuals.
For instance, Poole systems~\citep{DBLP:journals/ai/Poole88} can be viewed a special case of
constrained I/O logics, and normal Reiter defaults~\citep{DBLP:journals/ai/Reiter80} are closely
related to them~\citep{DBLP:journals/jphil/MakinsonT01}.

The use of I/O logics for legal reasoning has been proposed by Boella
 and van der Torre (\citeyear{boella4constitutive}).
The legal knowledge base DAPRECO
uses a first-order variant of I/O logics and extensively encodes the
\emph{General Data Protection Regulation} (GDPR) of the European Union~\citep{robaldo2019formalizing}.
Automated reasoning
within knowledge bases such as DAPRECO constitutes a prototypical application scenario of computer-assisted
normative reasoning and is of high interest to both academic and industrial stakeholders.
However, currently there exist only few automated reasoning procedures 
for I/O logics (see related work below).

The decision problem whether $\varphi \in \out(N,A)$ is referred to as \emph{entailment
problem}. This is the decision problem studied in the literature.
In this article a generalization called the \emph{\problem\ problem} is studied: Given a set of norms $N$ and an input $A$,
compute a finite representation $B$ of formulas (called a basis) such that $\cn(B) = \out(N,A)$.
In other words, the entailment problems ask whether a given formula can be detached in a situation
under a set of norms, and the \problem\ problem generates (a finite representation of) the set of \emph{all entailed} formulas.
The entailment problem for $\varphi$ can easily be solved once a basis $B$ of the output is found by checking
whether $B \vdash \varphi$ (using some standard deduction method). The other way around is, however,
not as clear and remains an open question.\footnote{In general, the output set $\out(N,A)$ contains formulas that are themselves not 
occurring in any norm in $N$, so a simple syntactical search is not sufficient.}

One research hypothesis of this work is that both the entailment problem and the \problem\ problem
are useful in practice and somewhat orthogonal: If an agent faces a new situation, a list of obligations
can be generated to guide the agent's next actions. If, on the other hand, only the entailment problem
is available, the agent would need to guess for most plausible obligations in force, and check their 
entailment one after one, possibly missing an important one. After the initial situation has been assessed
by the agent, entailment procedures seem most useful to check for compliance of potential individual
follow-up actions.

This work aims to provide automation for different I/O logic formalisms wrt.\ the \problem\ problem.
To that end, a reduction of I/O logic reasoning to the satisfiability decision problem (SAT)
in propositional logic~\citep{DBLP:series/faia/336} is given and an implementation of the resulting procedures within the reasoner \rio\
is presented.

The contributions of this work are as follows:
  \begin{itemize}
    \item A novel SAT-based reduction for unconstrained and constrained I/O logic
          reasoning is presented. Overall, I/O logic reasoning with eight output operators
          is supported, as well as 16 constrained I/O logic reasoning set-ups.\footnote{
            Some these set-ups are semantically equivalent, see Sect.\ \ref{sec:iol} for a discussion.}
          Up to the author's knowledge, this work constitutes the first automated reasoning
          procedure for unconstrained and constrained I/O logics wrt. the \problem\ problem.
    \item Previous work of the author~\citep{steenDEON} is simplified and generalized so that a finite representation
          of the whole output set is computed (see below for a brief discussion on this).
          This allows the simple automation of more general output operators,
          discussed in the literature~\citep{DBLP:conf/deon/ParentT14}.
    \item A knowledge representation format for automated reasoning in I/O logics is presented.
          This format is aligned with the well-established TPTP standard~\citep{Sut17} for automated reasoning systems, in
          particular it instantiates the TPTP extension to non-classical logics formalisms~\citep{DBLP:conf/paar/SteenFGSB22,DBLP:conf/paar/SteenS24}.
    \item An open-source implementation of the presented reasoning procedures is presented. The automated
          reasoning system is called \rio~\citep{rio}, and reads problems and outputs results according to the TPTP standard.
  \end{itemize}

In Sect.\ \ref{sec:iol} the basic notions of I/O logics are introduced. Sect.\ \ref{sec:reduction}
gives reductions of the different output operators to SAT, and Sect.\ \ref{sec:impl}
presents an implementation of the SAT-based reasoning procedures. In Sect.~\ref{sec:casestudy}
the automation methods are applied to a small case study. Finally,
Sect.\ \ref{sec:conclusion} concludes and sketches further work.

\paragraph*{Related work.}
The idea of reducing entailment in I/O logics to classical entailment is not new.
A so-called \emph{relabelling technique} that reduces the entailment
problem for $\out_2$ and $\out_4$ to classical logic entailments is presented by 
\cite{DBLP:journals/jphil/MakinsonT00} in their original I/O logic article.
While this original reduction provides a more direct (and efficient) approach
for the entailment problem for these output operators, 
it is unclear how to generalize to constrained I/O reasoning, or for application
to the \problem\ problem.
The relabelling approach for the entailment problem
was extended to $\out_1$ and $\out_3$ by \cite{DBLP:conf/kr/CiabattoniR23}.

\cite{DBLP:conf/kr/CiabattoniR23} present proof-search-oriented sequent calculi for so-called causal
input/output logics, and then present a simple connection to the original I/O logics via an additional
entailment check in classical logic. This work focuses on the entailment problem and only on unconstrained I/O logics.
\cite{DBLP:conf/tableaux/Lellmann21} presents sequent calculi aligning the entailment problem in I/O logics
to a fragment of conditional logics. This approach was implemented in Prolog.

Benzmüller et al. (\citeyear{J46}) present a shallow semantical embedding for two unconstrained I/O logic operations
into classical higher-order logic. They make use of the fact that the output operators
$\out_2$ and $\out_4$, cf. Sect.\ \ref{sec:iol}, can be
represented by adequate translations to modal logics \textbf{K} and \textbf{KT}.
Such an approach is not available
for the remaining output operators as well as all for the constrained I/O logics setting.

An oracle-based non-deterministic algorithm for the entailment problem in constrained I/O logic
is presented by~\cite{DBLP:journals/japll/SunR17}. The author is not aware of any implementation, though.

Robaldo (\citeyear{DBLP:conf/icail/Robaldo21}) models reified I/O logics~\citep{DBLP:journals/logcom/RobaldoS17},
an enriched version of the formalism used in this paper,
in the recent W3C Shapes Constraint Language (SHACL)~\citep{SHACL} for reasoning with the W3C RDFs/OWL
standard language for the Semantic Web~\citep{RDF}.

Earlier work of the author addresses ''single-shot'' reasoning procedures for unconstrained
I/O logics~\citep{steenDEON} which seem unfit for generalization to the constrained case. 
In particular, these reasoning procedures only allow for the computation whether a given formula $\varphi$
is output by an (unconstrained) output operator. The generalized approach presented here
computes a finite representation of the infinite output set.

A framework for flexible automated normative reasoning, bridging between the well-established normative and legal knowledge representation standard
LegalRuleML~\citep{DBLP:conf/rweb/AthanGPPW15} and the TPTP standard~\citep{Sut17} for automated theorem provers,
has been developed by Steen and Fuenmayor (\citeyear{SteenFuenmayorRuleML}). The automation procedures
for I/O logics presented here can be seamlessly integrated into that framework, enriching it by I/O logic reasoning capacities.

The author is not aware of any related work wrt.\ the \problem\ problem as defined further above, and 
on automation of constrained I/O logics.

\section{Input/Output Logic \label{sec:iol}}

I/O logics were introduced by Makinson and van der Torre (\citeyear{DBLP:journals/jphil/MakinsonT00,DBLP:journals/jphil/MakinsonT01}) as a formalism for the abstract study
of conditional norms, e.g., obligations constituted by some
normative code.
In the I/O formalism, inputs are mapped to outputs with respect to a normative system
using so-called \emph{output operators}.
Intuitively, inputs describe the situational context and the outputs are factual obligations
in the given context.
The output operators differ in their detachment properties.

Two I/O logics settings are considered:
Every operator's output can be used as-is; this is also referred to as \emph{unconstrained}
I/O logics~\citep{DBLP:journals/jphil/MakinsonT00}. In constrained I/O logics, instead,
additional constraints may be imposed on the output such that it is guaranteed to be consistent
with the constraints~\citep{DBLP:journals/jphil/MakinsonT01}. Constrained I/O logics are non-monotonic,
and allow for defeasible reasoning with conflicting obligations (deontic dilemmas) as well as 
norm violations (contrary-to-duty scenarios).

The most important notions are summarized in the following, with material from the relevant literature~\citep{iologichandbook}
and partly adopted from earlier work of the author~\cite{steenDEON}. Since the I/O logics formalisms are not widely known and might
seem counter-intuitive in the beginning,
the subsequent introduction is supported by a few examples.

\subsection{Unconstrained I/O Logics}\label{ssec:iol}

Let $L$ be the language of classical propositional logic over some set of propositional symbols
 with the usual connectives
$\neg$, $\land$ and $\lor$ for negation, conjunction and disjunction, respectively.
The symbols $\top$ and $\bot$ represent some arbitrary tautology and contradiction, respectively.
Furthermore let $\vdash$ denote the consequence relation of classical propositional logic, i.e., 
for all $x \in L$ and $A \subseteq L$ it holds that $A \vdash x$ if and only if $x$ is a semantical consequence of $A$.
The set of consequences $\cn(\cdot)$ is defined as $\cn(A) := \{ x \in L \mid A \vdash x \}$.

\paragraph*{Syntax.} A \emph{normative system} $N \subseteq L \times L$ is a 
set of pairs $(a,x)$ of formulas.\footnote{
In the literature the formulas are usually referred to as $a, b, c, \ldots$
for inputs and $x, y, z$ for outputs. This is also adopted here.
}
A pair
$(a,x) \in N$ represents the conditional norm that \emph{given $a$, it ought to be $x$}. 
By convention, given a norm $n = (a,x)$ the first element $a$ is also referred to as the \emph{body}
of $n$, denoted $\body(n)$, and the second element $x$ is referred to as the \emph{head} of $n$, 
denoted $\head(n)$. Let furthermore $\heads(N^\prime) := \{ \head(n) \mid n \in N^\prime \}$
for some $N^\prime \subseteq N$, and analogous for $\bodies(N^\prime)$.
If $A$ is a set of formulas, the image of $N$, written as $N(A)$,
is given by $N(A) := \{x \in L \mid (a,x) \in N \textit{ for some $a \in A$} \}$.
Given a normative system $N$ and a set of formulas $A$, $out(N,A) \subseteq L$
denotes the output of input $A$ under $N$ where $out$ is an output operator. 
It is assumed in the following that $A$ and $N$ are finite sets.

\paragraph*{Semantics.}
The semantics of I/O logics is operational in the sense that the meaning of normative concepts
is given by generating outputs with respect to a normative system and an input by detachment.
The four distinct output operators
$\out_i$, $i \in \{1,2,3,4\}$, studied in the literature are defined
as follows~\citep{DBLP:journals/jphil/MakinsonT00}:
\begin{equation*}
\begin{split}
\out_1(N,A) &= \cn(N(\cn(A))) \\
\out_2(N,A) &= \bigcap \big\{ \cn(N(V)) \mid  V \supseteq A, V \text{ complete}\big\} \\
\out_3(N,A) &= \bigcap \big\{ \cn(N(B)) \mid  A \subseteq B = \cn(B) \supseteq N(B) \big\} \\
\out_4(N,A) &= \bigcap \big\{ \cn(N(V)) \mid  A \subseteq V \supseteq N(V), V \text{ complete} \big\}
\end{split}
\end{equation*}
\noindent where $V \subseteq L$ is called complete iff $V = L$ or $V$ is a maximally consistent
set.\footnote{
A clarification might be in order: The set $B$ in the definition of $\out_2$
is defined to be a least superset of $A$ that is closed both under $N(.)$ and $\cn(.)$. This also applies to 
the set $V$ in the definition of $\out_4$, but the definition can be simplified as $V = \cn(V)$ for
every complete set $V$.
} 

$\out_1$, called \emph{simple-minded}, mainly illustrates the basic machinery of
detachment underlying the output operators. $\out_2$ is called \emph{basic} and augments
the simple-minded output with reasoning by cases. $\out_3$ and $\out_4$ are referred to as 
\emph{reusable} and \emph{basic reusable}, respectively, and extend the former operators
with iterative detachment, i.e., the respective output is exhaustively recycled as input.

Unconstrained I/O logic reasoning with $\out_1, \out_2$ and $\out_4$ is shown to be coNP-complete
by Sun and Robaldo \citeyear{DBLP:journals/japll/SunR17}, $\out_3$ was recently shown to be 
coNP-complete as well~\citep{DBLP:conf/kr/CiabattoniR23}.

\begin{example}
Let $N = \big\{(a, x), (b, y), (x \land y, z) \big\}$ and let 
$A_1 = \{ a, b \}$ and $A_2 = \{ a \lor b \}$.
The following outputs are generated using the different output operators:
\begin{enumerate}
  \item $\out_1(N, A_1) = \out_2(N, A_1) = \cn(\{x,y\})$
  \item $\out_3(N, A_1) = \out_4(N, A_1) = \cn(\{x,y,z\})$
  \item $\out_1(N, A_2) = \out_3(N, A_2) = \cn(\emptyset)$
  \item $\out_2(N, A_2) = \cn(\{x \lor y\})$
  \item $\out_4(N, A_2) = \cn(\{x \lor y, z\})$
\end{enumerate}
Note that in every case it holds that $\{a,b\} \nsubseteq \out_{\star}(N, .)$. 
Also, the output set of every I/O operation is infinite as the set of consequences $\cn(.)$ 
contains at least the formulas $\top$, $\top \land \top$, $\top \land \top \land \top$, \ldots, etc.
\end{example}

For every output operator $\out$ a \emph{throughput}-variant of $\out$, denoted $\out^+$,
may be defined by $\out^+(N,A) := \out(N \cup I, A)$ where $I = \{ (x,x) \mid x \in L \}$~\citep{DBLP:journals/jphil/MakinsonT00}.
Throughput operators forward the input $A$ to the output set, in addition to the usual output generated by its
underlying output operator, i.e., if $a \in A$ then $a \in \out^+(N,A)$. It is known that $\out_2^+$ and $\out_4^+$ 
collapse to classical logic~\citep{DBLP:journals/jphil/MakinsonT01}. Together with the four principal output operators
above, there hence exist seven distinct output operations in total.

It is important to note that whereas a conditional norm $(a,x)$ looks similar to an
implication $a \Rightarrow x$, the output semantics is quite different: 
As an example, assume $N = \{(a,x)\}$ and $A = \{a\}$. In this case
all output operators $\out_i$ yield $\out_i(N,A) = \cn(\{ x \})$. In
particular $a \notin \out_i(N, A)$, while an analogous encoding as implication on the object logic level 
entails that
 $\{a, x\} \subseteq \cn(\{a \Rightarrow x, a\})$. 
 An interpretation as material implication can be modeled in I/O logics
 by using, e.g., the output operator $\out_4^+$~\citep{DBLP:journals/jphil/MakinsonT00}.
The operational approach of I/O logics allows for a more fine-grained control over
detached obligations. This is particularly important in the presence of norm violations
where a representation using material implications leads to inconsistencies.\footnote{
Consider the normative system $N = \{(\top, \neg k)\}$, where $k$ stands for \emph{killing someone}.
The norm $(\top, \neg k)$ expresses the (unconditional) obligation not to kill anyone. 
Consider a situation $A$ where this obligation has been violated, i.e., $\{k\} \subseteq A$.
It follows that $\out_i(N, A) = \cn(\neg k)$, but 
$\cn(A \cup \{\top \Rightarrow \neg k\}) = \cn(\{\bot\}) = L$.
}

Makinson and van der Torre (\citeyear{DBLP:journals/jphil/MakinsonT00}) provide a syntactic characterization of the different output operators
in terms of derivation rules. They are not discussed here
as they do not provide effective means for automation, in particular (1) they
do not allow backward proof search (as done in most contemporary automated reasoning systems) and (2)
they only derive the fact that some specific formula is in the output set (as opposed to describing the whole output set
as done in this work).

\subsection{Constrained I/O Logics}
Constrained I/O logic~\citep{DBLP:journals/jphil/MakinsonT01} extends unconstrained I/O logic with
a notion of 
additional constraints that outputs must to be consistent with.
This enables the I/O formalism to consistently handle deontic paradoxes such as norm violations and 
deontic dilemmas~\citep{gabbay2013handbook}.
To that end, a fundamental technique from belief change~\citep{DBLP:series/sbis/FermeH18} is adapted,
as used in, e.g., so-called contraction operations:
The set of norms is restricted just enough that consistency of the output with the constraints 
is guaranteed.
Of course there exist multiple accordingly restricted sets in
general, and each such set potentially generates a different output set.

Let $C$ be a set of formulas called \emph{constraints}.
The notions of \emph{maxfamilies} and \emph{outfamilies} 
wrt.\ some output operator $\out$ are defined as follows:\footnote{ \label{fn:clarification}
  The reference to the output operator $\out$ occurring as subscript in the terms $\mathit{maxfamily}_\out(N, A, C)$
  and $\mathit{outfamily}_\out(N, A, C)$ is usually omitted
  in the literature. In the following, it is explicitly stated as clarification.
}
\begin{equation*}\begin{split}
\mathit{maxfamily}_\out(N, A, C) &:=
  \big\{ N^\prime \subseteq N \mid N^\prime \text{ is $\subseteq$-maximal in } N, \out(N^\prime,A) \cup C \nvdash \bot \big\} \\
  \mathit{outfamily}_\out(N, A, C) &:=
  \big\{ \out(N^\prime,A) \mid N^\prime \in \mathit{maxfamily}_\out(N,A,C) \big\}
\end{split}\end{equation*}
Intuitively, a $\mathit{maxfamily}$ formalizes the concept of maximal choices of norms from $N$ such that output generated
using that set is consistent with $C$.
An $\mathit{outfamily}$ simply maps norm sets to their respective output set, i.e., detached
outputs via $\out$ consistent with $C$. 

\begin{example}[Contrary-to-duty]\label{example:chisholm}
The so-called \emph{Chisholm paradox} is a well-known contrary-to-duty scenario
in the deontic logic community~\citep{chisholm}. Consider three norms 
$N = \{ (\top, h), (h, t), (\neg h, \neg t)\}$ where $h$ and $t$ stand for
\emph{helping your neighbor} and \emph{telling him/her that you come over}, respectively.
The set $N$ represents following norms: \emph{you ought to help your neighbor},
\emph{if you help your neighbor, then you ought to tell him/her
that you are coming over}, and \emph{if you don't help your neighbor,
 then you ought to not tell him/her that you come over}, respectively.
Furthermore assume that $A = \{ \neg h\}$.

For $i \in \{3,4\}$ it holds that $\bot \in \out_i(N,A)$, i.e., unconstrained
I/O logic produces an inconsistent output set.
Now consider the constraint $C = A$.\footnote{
  In the context of norm violations, the constraints represent facts that are undeniably true
  and cannot be changed. The output operator
  is constrained to only output obligations that are consistent with the constraints, e.g., in order to
  compute what other obligations are in force even if others have already been violated.
  For a thorough discussion on the choice of constraints in scenarios of norm violation
  see, e.g.,~\citep{hansson1969analysis}.
} It follows that $\mathit{maxfamily}_\out(N,A,C) = \big\{ \{(h, t), (\neg h, \neg t) \} \big\}$
and hence $\mathit{outfamily}_\out(N,A,C) = \big\{ \{ \neg t \} \big\}$ for $\out \in \{\out_3,\out_4\}$.
\end{example}

In constrained I/O logic, the output is generated from outfamilies (i.e., sets of sets of formulas)
by aggregate functions. Two options are discussed in the literature:
When assuming so-called \emph{credulous} output, denoted $\out_{\cup}^C(N,A)$, the
join of all elements is returned, i.e., 
$\out_{\cup}^C(N,A) := \bigcup\mathit{outfamily}_\out(N,A,C)$.
In \emph{skeptical} output, denoted $\out_{\cap}^C(N,A)$, all common output formulas are returned,
i.e., $\out_{\cap}^C(N,A) := \bigcap\mathit{outfamily}_\out(N,A,C)$.\footnote{
  The remark of footnote~\ref{fn:clarification} also applies to
  the reference to $C$ in the term $\out_{\cdot}^C(N,A)$.
}
The terms credulous and skeptical indicate that the latter variant is more restrictive than the former
(or ''more cautious'' when generating outputs resp.\ obligations).

\begin{example}[Contrary-to-duty, continued]
Let $N$, $A$ and $C$ be as in Example~\ref{example:chisholm}.
For $\out \in \{\out_3, \out_4\}$ it holds that $\out_{\cap}^C(N,A) = \out_{\cup}^C(N,A) = \cn(\{ \neg t \})$, in
particular $\out_{\cdot}^C(N,A) \nvdash \bot$. This means that constrained I/O logic
consistently infers that, in the given situation you ought not to tell your neighbor that you're coming over.
Note that obligations that are in violation with the situation (but, in principle, still in force) are not in the output set.
\end{example}

Constrained I/O logic is non-monotonic and allows for defeasible
reasoning~\citep{DBLP:journals/jphil/MakinsonT01}.
This is an important capability of reasoning formalisms in the context of, among others,
normative reasoning, learning, and multi-agent systems with incomplete knowledge and/or 
fallible perception~\citep{DBLP:series/sbis/FermeH18,gabbay2013handbook}.
Credulous and skeptical constrained I/O logics reasoning with all the introduced output operators
is $\text{NP}^{\text{NP}}$-complete and $\text{coNP}^{\text{NP}}$-complete, respectively~\citep{DBLP:journals/japll/SunR17}
(that is, on the second level of the polynomial hierarchy, $\Sigma^p_2$ and $\Pi^p_2$, respectively).

Constrained I/O logic has no known proof theory~\citep[Sect.\  5.5]{iologichandbook}.


\section{Reductions to SAT \label{sec:reduction}}
Reductions of the \problem\ problem in unconstrained and constrained I/O logic reasoning to a sequence of
SAT problems is given.
The main idea is to represent the (infinite) output set of an I/O logic output operator
by a finite set of representatives.
Let $\Phi$ be a set of formulas. A \emph{finite base} $\mathbf{B} \subseteq L$ of
$\Phi$ is a finite set of formulas such that each formula of $\Phi$ is entailed by it, i.e.,
$\mathbf{B} \vdash \phi$ for each $\phi \in \Phi$.
With the goal of logic automation in mind, the \problem\ problem instances considered
in this work are assumed to be finite in the following sense: The vocabulary of $L$ (i.e., set of
propositional atoms) is finite, and both the input $A$ and the set of norms $N$ are finite.
In this setting, it is guaranteed that finite bases for the (deductively closed) output sets 
always exist.
The construction of the bases is reduced to
procedures that use SAT decisions as subroutines. Once a basis is constructed,
checking whether a formula is entailed by it (and hence in the output set)
can again be decided by a SAT problem.
In the following, $\unsat(\Phi)$ denotes the SAT problem of showing
unsatisfiability of $\Phi \subseteq L$.

\subsection{Reductions for $\out_1$ and $\out_3$}
\newcommand{\triggers}{\mathbin{\triangleright}}
\newcommand{\weaklytriggers}{\mathbin{\blacktriangleright}}
\newcommand{\minweaklytriggers}{\mathbin{\blacktriangleright\!\!\blacktriangleright}}

Output of the operators $\out_1$ and $\out_3$ is given by single-step
and iterative detachment, respectively.
The common detachment machinery is formalized by \emph{directly triggered norms}:
\begin{definition}[Directly Triggered Norm]
Let $n \in N$ be a conditional norm.
Input $A$ \emph{directly triggers} $n$, written $A \triggers n$, if and only if $A \vdash \body(n)$.
\end{definition}

For any $n \in N$ it holds that $A \triggers n$ iff
$\unsat\big(A \cup \{\neg \body(n)\}\big)$. 
The basis of a simple-minded output via $\out_1$,
denoted $\mathbf{B}_1(N, A)$, is then defined as:
\begin{equation*}
  \mathbf{B}_1(N, A) := \big\{ \head(n) \in L \mid A \triggers n \text{ for some } n \in N \big\}
\end{equation*}
It is easy to see that $\mathbf{B}_1(N, A)$ is indeed a finite basis of $\out_1(N,A)$:

\medskip
\begin{theorem}\label{theorem:out1}
$\phi \in \out_1(N,A)$ if and only if $\mathbf{B}_1(N, A) \vdash \phi$.
\end{theorem}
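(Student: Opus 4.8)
The plan is to unfold the definition $\out_1(N,A) = \cn(N(\cn(A)))$ and match it against $\cn(\mathbf{B}_1(N,A))$, since $\mathbf{B}_1(N,A) \vdash \phi$ is by definition exactly $\phi \in \cn(\mathbf{B}_1(N,A))$. Because $\cn$ is monotone and idempotent, it suffices to show that $N(\cn(A))$ and $\mathbf{B}_1(N,A)$ have the same consequence closure; in fact I expect to prove the stronger statement $N(\cn(A)) = \mathbf{B}_1(N,A)$ as sets, which immediately gives the theorem.

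First I would prove $\mathbf{B}_1(N,A) \subseteq N(\cn(A))$. Take $\phi \in \mathbf{B}_1(N,A)$; by definition there is a norm $n = (a,x) \in N$ with $A \triggers n$ and $\phi = x = \head(n)$. By the definition of direct triggering, $A \vdash a$, i.e., $a \in \cn(A)$. Hence $x \in N(\cn(A))$ by the definition of the image $N(\cdot)$, so $\phi \in N(\cn(A))$.

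Next I would prove the converse inclusion $N(\cn(A)) \subseteq \mathbf{B}_1(N,A)$. Take $\phi \in N(\cn(A))$; then there is some $a \in \cn(A)$ and a norm $(a,\phi) \in N$. Since $a \in \cn(A)$ means $A \vdash a = \body((a,\phi))$, we have $A \triggers (a,\phi)$, and therefore $\phi = \head((a,\phi)) \in \mathbf{B}_1(N,A)$. This establishes $N(\cn(A)) = \mathbf{B}_1(N,A)$.

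Finally I would apply $\cn$ to both sides of this set equality, obtaining $\cn(N(\cn(A))) = \cn(\mathbf{B}_1(N,A))$, i.e., $\out_1(N,A) = \cn(\mathbf{B}_1(N,A))$. Then $\phi \in \out_1(N,A)$ iff $\phi \in \cn(\mathbf{B}_1(N,A))$ iff $\mathbf{B}_1(N,A) \vdash \phi$, which is the claim. I do not anticipate a real obstacle here: the proof is a direct unfolding of definitions, and the only thing to be slightly careful about is keeping the two readings of $A \triggers n$ (via $\vdash$ and via $\unsat$) straight — but only the $\vdash$ characterization is needed, and the equivalence with $\unsat$ has already been noted in the text. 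The finiteness of $\mathbf{B}_1(N,A)$, needed for it to qualify as a finite base, follows immediately from the assumption that $N$ is finite.
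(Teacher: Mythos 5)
Your proposal is correct and follows essentially the same route as the paper: the paper's proof simply asserts the set identity $\mathbf{B}_1(N,A) = N(\cn(A))$ and then appeals to the definition of $\out_1$, which is exactly the equality you establish (in more detail) before closing under $\cn$.
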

\begin{proof}
It holds that $\mathbf{B}_1(N, A) = N(\cn(A))$. By definition of $\out_1$
the assertion follows directly.
\end{proof}

For reusable output $\out_3$ the basis $\mathbf{B}_3(N, A)$
contains every norm's head that can be detached by iterating direct triggering.
This is similar to calculating the so-called \emph{bulk increments}~\citep{stolpe2008norms},
with the difference that bulk increments are closed under $\cn(.)$ and are hence
not finite.
More formally, let $\mathbf{B}_3^i(N, A)$, $i \geq 0$, be a family of 
sets of formulas:
\begin{equation*}\begin{split}
  \mathbf{B}_3^0(N, A) &:= \mathbf{B}_1(N, A) \\
  \mathbf{B}_3^{i+1}(N, A) &:= \mathbf{B}_1\big(N, A \cup \mathbf{B}_3^{i}(N, A)\big) 
\end{split}\end{equation*}
Because $\vdash$ is monotone the $\mathbf{B}_3^{i}(N, A)$
are totally ordered with respect to set inclusion, i.e.,
$\mathbf{B}_3^i(N, A) \subseteq \mathbf{B}_3^{i+1}(N, A)$ for every $i \geq 0$.
Also $\mathbf{B}_3^i(N, A)$ is bound from above by $\heads(N)$.  
It follows that there exists a maximal element within that chain,
in particular there exists some $j \geq 0$ such that $\mathbf{B}_3^{j}(N, A) = \mathbf{B}_3^{j+1}(N, A)$.
We set $\mathbf{B}_3(N, A) := \mathbf{B}_3^{j}(N, A)$.

\medskip
\begin{theorem}\label{theorem:out3}
$\phi \in \out_3(N,A)$ if and only if $\mathbf{B}_3(N, A) \vdash \phi$.
\end{theorem}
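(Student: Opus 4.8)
The plan is to establish the two inclusions $\cn(\mathbf{B}_3(N,A)) \subseteq \out_3(N,A)$ and $\out_3(N,A) \subseteq \cn(\mathbf{B}_3(N,A))$ separately, mirroring the pattern of Theorem~\ref{theorem:out1} but using the stabilization of the chain $\mathbf{B}_3^0(N,A) \subseteq \mathbf{B}_3^1(N,A) \subseteq \cdots$ in place of the single identity $\mathbf{B}_1(N,A) = N(\cn(A))$. Writing $\mathbf{B}_3 := \mathbf{B}_3(N,A)$, the structural fact I would record first is the fixed-point equation $\mathbf{B}_3 = \mathbf{B}_1(N, A \cup \mathbf{B}_3) = N(\cn(A \cup \mathbf{B}_3))$, which follows at once from the definition of the chain, the identity $\mathbf{B}_3^j = \mathbf{B}_3^{j+1}$ at the stabilization point $j$, and the identity $\mathbf{B}_1(N,X) = N(\cn(X))$ used in the proof of Theorem~\ref{theorem:out1}.

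For $\out_3(N,A) \subseteq \cn(\mathbf{B}_3)$ I would show that the set $B^\ast := \cn(A \cup \mathbf{B}_3)$ is one of the sets ranged over in the definition of $\out_3$, i.e., that $A \subseteq B^\ast$, $B^\ast = \cn(B^\ast)$, and $N(B^\ast) \subseteq B^\ast$. The first two conditions are immediate from monotonicity and idempotence of $\cn(\cdot)$; the third follows from the fixed-point equation, since $N(B^\ast) = N(\cn(A \cup \mathbf{B}_3)) = \mathbf{B}_3 \subseteq B^\ast$. Because $B^\ast$ participates in the intersection defining $\out_3(N,A)$, we get $\out_3(N,A) \subseteq \cn(N(B^\ast)) = \cn(\mathbf{B}_3)$, again using $N(B^\ast) = \mathbf{B}_3$.

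For the converse $\cn(\mathbf{B}_3) \subseteq \out_3(N,A)$ I would fix an arbitrary $B$ with $A \subseteq B = \cn(B) \supseteq N(B)$ and prove by induction on $i$ that $\mathbf{B}_3^i(N,A) \subseteq N(B)$. The base case $\mathbf{B}_3^0(N,A) = \mathbf{B}_1(N,A) = N(\cn(A)) \subseteq N(B)$ uses $\cn(A) \subseteq \cn(B) = B$ together with monotonicity of $N(\cdot)$. For the step, the induction hypothesis together with $N(B) \subseteq B$ gives $A \cup \mathbf{B}_3^i(N,A) \subseteq B$, hence $\cn(A \cup \mathbf{B}_3^i(N,A)) \subseteq B$, and therefore $\mathbf{B}_3^{i+1}(N,A) = N\big(\cn(A \cup \mathbf{B}_3^i(N,A))\big) \subseteq N(B)$. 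Instantiating at $i = j$ yields $\mathbf{B}_3 \subseteq N(B) \subseteq \cn(N(B))$, so $\cn(\mathbf{B}_3) \subseteq \cn(N(B))$; since $B$ was arbitrary, $\cn(\mathbf{B}_3)$ lies in the intersection, i.e., in $\out_3(N,A)$.

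I expect the only real obstacle to be a bookkeeping one: verifying that $B^\ast$ actually meets all three closure conditions of the $\out_3$ definition — in particular closure under $N(\cdot)$ — which is precisely where the fixed-point property of the stabilized chain is essential, and where a naive candidate such as $\cn(\mathbf{B}_3)$ (without re-adding $A$) would not work. Everything else reduces to monotonicity and idempotence of $\cn(\cdot)$ and monotonicity of $N(\cdot)$, together with the well-definedness of $j$ already argued before the theorem.
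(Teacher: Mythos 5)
Your proof is correct and follows essentially the same route as the paper: both directions rest on the stabilized chain $\mathbf{B}_3^{j}(N,A)$ being exactly the $N(\cdot)$-image of the least $\cn$- and $N$-closed superset of $A$. Your write-up is in fact more explicit than the paper's — you exhibit $\cn(A \cup \mathbf{B}_3(N,A))$ as a concrete member of the family defining $\out_3$ (so that only membership, not leastness, is needed) and you prove $\mathbf{B}_3^i(N,A) \subseteq N(B)$ by induction, a containment the paper merely asserts.
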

\begin{proof}
Assume $\phi \in \out_3(N,A)$. Since $N(\cdot)$ and $\cn(\cdot)$
are monotone, it holds that $\cn(N(A^*)) \vdash \phi$
where $A^*$ is the least superset of $A$ that is closed
both under $\cn(\cdot)$ and $N(\cdot)$. 
Then, because of compactness and by definition of $\mathbf{B}_3(N, A)$, there exists some $i \geq 0$ such that 
$\cn(A^*) = \cn(\mathbf{B}_3^i(N, A))$ and thus
$\mathbf{B}_3(N, A) \vdash \phi$.

Conversely, if $\mathbf{B}_3(N, A) \vdash \phi$
then $\mathbf{B}_3^i(N, A) \vdash \phi$ for some $i \geq 0$.
So there exists a subset $\{ n_1, \ldots, n_k \} \subseteq N$ of norms 
such that $A \cup \mathbf{B}_3^i(N, A) \vdash \body(n_1) \land \ldots \land \body(n_k)$
and $\{ \head(n_1), \ldots, \head(n_k) \} \vdash \phi$.
It holds that $A \cup \mathbf{B}_3^i(N, A) \subseteq B$ for every set $B$ that is closed
both under $\cn(\cdot)$ and $N(\cdot)$, and hence $\phi \in \out_3(N,A)$.
\end{proof}

\paragraph*{Construction of bases.}
The construction of $\mathbf{B}_1(N, A)$ is done via
the procedure $\texttt{directlyTriggeredNorms}(A,N)$ that lifts the
notion of a directly triggered norm to sets of norms $N$ by filtering as follows:
\begin{equation*}
\texttt{directlyTriggeredNorms}(A, N) := \{ n \in N \mid A \triggers n \}
\end{equation*}
Finally we set
$$\mathbf{B}_1(N, A) := \heads\big(\texttt{directlyTriggeredNorms}(A, N)\big).$$
This amounts to a sequence of $|N|$ SAT problems.

The computational construction of the basis $\mathbf{B}_3(N, A)$ uses
a fixed-point iteration.
For each $i \geq 0$ let $N_i := \{ n \in N \mid A_i \triggers n \}$ be the subset of all norms that are directly triggered by $A_i$,
where $A_i \subseteq L$ is the augmented input after each iteration. Initially, $A_0 := A$
and let $A_{i+1} := A_i \cup \{ \head(n) \mid n \in N_i \}$.
Calculating $N_i$ amounts to at most $|N|$ SAT problems. 
As soon as $N_{j+1} = N_j$ for some $j \geq 0$, a fixed point has been found and
$\mathbf{B}_3(N, A) := \heads(N_j)$. 
This procedure is displayed in Fig.~\ref{fig:basis3}. Here, the incremental sets $A_i$ and $N_i$ are represented
by the variables \texttt{A'} and \texttt{N'}, respectively. The convergence criterion is slightly rewritten
to avoid redundant computations: Firstly, the variable \texttt{T} collects norms that are newly triggered by \texttt{A'}
in the respective iteration. Only newly triggered norms have to be collected in $\texttt{N} \setminus \texttt{N'}$,
as the norms in \texttt{N'} were already triggered before.
If no new norms can be detached (i.e., $\texttt{T'} = \emptyset$) the loop terminates.

The computation of $\mathbf{B}_3(N, A)$ requires $\mathcal{O}(|N|^2)$
individual SAT problems in the worst case: Consider a normative system
$N = \{n_1, \ldots n_k\}$, $k > 1$, where each norm $n_{i+1}$ is directly triggered
by $A \cup \head(n_{i})$, $i \geq 1$. In this case, in every iteration,
$\mathcal{O}(k-i)$ SAT problems are generated, $1 \leq i < k$, finally converging
to a fixed point after $k = |N|$ iterations.

\begin{figure}[tb]
  \centering
  \begin{lstlisting}[frame=lines,basicstyle=\small\ttfamily,mathescape,morekeywords={if,for,endfor,endif,then}]
A' := A  
T  := directlyTriggeredNorms(A, N)  
N' := T  
while $\mathtt{T} \neq \emptyset$ do
  A' := A' $\cup$ $\heads$(T)
  T  := directlyTriggeredNorms(A', N $\setminus$ N')
  N' := N' $\cup$ T
end
return $\heads$(N')
  \end{lstlisting}
  \caption{Construction of base $\mathbf{B}_3(N, A)$ using fixed-point iteration. \label{fig:basis3}}
\end{figure}


\subsection{Reductions for $\out_2$ and $\out_4$}
Finite bases for basic output $\out_2$ are defined on top of a generalized notion of
norm triggering, called \emph{weakly triggered} norms.
Recall that $\out_2$ augments simple-minded output $\out_1$ with
reasoning by cases, which makes norm triggering not only dependent on the input
$A$ but also on the existence of other norms in $N$.
Consider $N = \{ (a,x), (\neg a, x), (b, y) \}$. For any $A$ it holds that
$x \in \out_2(N, A)$ because $a \lor \neg a$ is a tautology and hence, by definition, $x$ is
in the output set.

\begin{definition}[Weakly Triggered Set of Norms]
Let $N$ be a normative system and $N^\prime \subseteq N$ with $N^\prime \neq \emptyset$.
Input $A$ \emph{weakly triggers $N^\prime$}, written $A \weaklytriggers N^\prime$, if and only if 
\begin{equation*}
A \vdash \bigvee_{n \in N^\prime}  \body(n).
\end{equation*}
The \emph{weak output} generated by $N^\prime$ is given by $\mathrm{wo}(N^\prime) := \bigvee_{n \in N^\prime}  \head(n)$.
\end{definition}
Intuitively, if a set of norms $N^\prime$ is weakly triggered by some input $A$, then at least one norm
in $N^\prime$ can indeed be detached by $A$ but it is unknown which one (following the reasoning
by cases approach of basic output). 
In the above example it holds that $A \weaklytriggers \{ (a,x), (\neg a, x) \}$ and
$A \weaklytriggers N$. If $N^\prime \subseteq N$ is weakly triggered by $A$, then so
is any superset $M \supseteq N^\prime$. Moreover, if $A \weaklytriggers N$ and $A \weaklytriggers M$ where
$N \subseteq M$, then $\mathrm{wo}(N) \vdash \mathrm{wo}(M)$. In the example above,
it holds that $\mathrm{wo}\big(\{ (a,x), (\neg a, x) \}\big) = x$ and $\mathrm{wo}(N) = x \lor y$.
In particular $\{x\} \vdash x \lor y$.

Consider another scenario:
Let $A = \{a \lor b \}$ and let $N = \{ (a,x), (b, y), (c, z) \}$.
Neither $a$ nor $b$ is entailed by $A$, but $x \lor y \in \out_2(N,A)$.
In particular, $\out_2(N,A)$ contains elements that are not 
head of any norm from $N$. However, it holds that $A \weaklytriggers \{(a,x), (b, y)\}$,
and $\mathrm{wo}\big(\{(a,x), (b, y)\}\big) = x \lor y$, which is exactly the desired output.

Note that every directly triggered norm $n$ also forms a weakly triggered singleton set $\{ n \}$ by definition.
Furthermore, if the input $A$ is inconsistent then every singleton subset of $N$ is weakly triggered.
In general, there may be many different weakly triggered sets in $N$.

$\mathbf{B}_2(N, A)$ is then defined as follows:
\begin{equation*}
  \mathbf{B}_2(N, A) := \big\{ \mathrm{wo}(N^\prime) \mid A \weaklytriggers N^\prime \text{ for some } N^\prime \subseteq N \big\}
\end{equation*}

\begin{theorem}\label{theorem:out2}
$\varphi \in \out_2(N,A)$ if and only if $\mathbf{B}_2(N, A) \vdash \varphi$.
\end{theorem}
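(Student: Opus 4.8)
The plan is to prove the two directions separately, relating the finite base $\mathbf{B}_2(N,A)$ to the intersection of $\cn(N(V))$ over complete supersets $V$ of $A$.

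First I would unfold the definition of $\out_2(N,A) = \bigcap\{\cn(N(V)) \mid V \supseteq A, V \text{ complete}\}$. For the direction ``$\mathbf{B}_2(N,A) \vdash \varphi \Rightarrow \varphi \in \out_2(N,A)$'', it suffices to show that every generator $\mathrm{wo}(N')$ with $A \weaklytriggers N'$ lies in $\cn(N(V))$ for each complete $V \supseteq A$; then $\mathbf{B}_2(N,A) \subseteq \out_2(N,A)$, and since $\out_2(N,A)$ is closed under $\cn(\cdot)$, any $\varphi$ entailed by $\mathbf{B}_2(N,A)$ is in $\out_2(N,A)$. Fix such a $V$. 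Since $V \vdash A \vdash \bigvee_{n\in N'}\body(n)$ and $V$ is complete (so either $V = L$, in which case everything is trivial, or $V$ is maximally consistent and hence prime), $V$ must contain $\body(n_0)$ for some $n_0 \in N'$. Then $\head(n_0) \in N(V)$, and since $\head(n_0) \vdash \mathrm{wo}(N')$ we get $\mathrm{wo}(N') \in \cn(N(V))$, as desired. The key fact exploited here is the primeness of maximally consistent sets: $V \vdash p \lor q$ implies $V \vdash p$ or $V \vdash q$.

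For the converse, ``$\varphi \in \out_2(N,A) \Rightarrow \mathbf{B}_2(N,A) \vdash \varphi$'', I would argue contrapositively. Suppose $\mathbf{B}_2(N,A) \nvdash \varphi$; by compactness there is a maximally consistent set $W \supseteq \mathbf{B}_2(N,A)$ with $\varphi \notin W$. The goal is to build a complete superset $V$ of $A$ with $\varphi \notin \cn(N(V))$, which would witness $\varphi \notin \out_2(N,A)$. The natural candidate is to take $V$ to be some maximally consistent extension of $A$ chosen so that $N(V) \subseteq W$; then $\cn(N(V)) \subseteq W$ and $\varphi \notin \cn(N(V))$. The construction of such a $V$ is the crux: I would consider the set $N'' := \{n \in N \mid \head(n) \notin W\}$ of ``bad'' norms whose heads are excluded, and I claim $A \cup \{\neg\body(n) \mid n \in N''\}$ is consistent, so it extends to a maximally consistent $V$ (or, if $A$ itself is inconsistent, handle that degenerate case directly --- then every singleton is weakly triggered, so $\mathbf{B}_2(N,A)$ contains every head and in fact $\bot \in \cn(\mathbf{B}_2)$ via $\mathrm{wo}$ of a pair of contradictory heads if present, or one must check $\out_2$ is also all of $L$). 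Granting consistency, $V$ contains no $\body(n)$ for $n \in N''$, so every norm triggered in $V$ has its head in $W$, giving $N(V) \subseteq W$ as needed.

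The main obstacle is establishing that $A \cup \{\neg\body(n) \mid n \in N''\}$ is consistent. If it were inconsistent, then $A \vdash \bigvee_{n \in N''}\body(n)$, i.e., $A \weaklytriggers N''$ (assuming $N'' \neq \emptyset$), so $\mathrm{wo}(N'') = \bigvee_{n \in N''}\head(n) \in \mathbf{B}_2(N,A) \subseteq W$; but $W$ is prime, forcing $\head(n) \in W$ for some $n \in N''$, contradicting the definition of $N''$. This closes the argument, with small side cases ($N'' = \emptyset$, $A$ inconsistent) checked separately. I would then remark that the identical weak-triggering machinery will be reused for $\out_4$ by iterating, analogously to how $\mathbf{B}_3$ was built from $\mathbf{B}_1$.
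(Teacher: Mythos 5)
Your proof is correct, and its second half takes a genuinely different route from the paper. The direction ``$\mathbf{B}_2(N,A) \vdash \varphi \Rightarrow \varphi \in \out_2(N,A)$'' matches the paper's argument essentially verbatim: primeness of a maximally consistent complete $V \supseteq A$ picks out a disjunct $\body(n_0)$ of the weakly triggered body-disjunction, so $\head(n_0) \in N(V)$ entails $\wo(N')$, and $\cn$-closedness of the intersection finishes it. For the converse, however, the paper argues directly: for each complete $V$ it extracts norms whose bodies $V$ entails and whose heads entail $\varphi$, forms the disjunction over all complete extensions of the conjunctions of these bodies, observes that $A$ entails it, and rewrites to CNF so that each clause is a weakly triggered set whose weak outputs jointly entail $\varphi$. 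Your contrapositive argument instead builds a single countermodel: a maximally consistent $W \supseteq \mathbf{B}_2(N,A) \cup \{\neg\varphi\}$ and a complete $V \supseteq A$ containing $\neg\body(n)$ for every norm $n$ with $\head(n) \notin W$, the consistency of $A \cup \overline{\bodies(N'')}$ being exactly where the definition of $\mathbf{B}_2$ and the primeness of $W$ enter, yielding $N(V) \subseteq W$ and hence $\varphi \notin \cn(N(V))$. Your version is shorter and sidesteps the delicate points in the paper's argument (the implicitly finite disjunction indexed by all complete extensions, and the CNF bookkeeping), at the price of being non-constructive; the paper's version exhibits the witnessing weakly triggered sets syntactically, which is what the later MUS-based construction of $\mathbf{B}_2$ computes. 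Two small points: in the degenerate case of inconsistent $A$ the clean statement is that $\out_2(N,A) = \cn(N(L)) = \cn(\heads(N))$, which coincides with $\cn(\mathbf{B}_2(N,A))$ because every singleton is weakly triggered and every $\wo(N')$ is entailed by any single head --- there is no need to reach for $\bot$, and neither side need be all of $L$. And the paper obtains $\mathbf{B}_4$ from $\mathbf{B}_2$ via the materialization identity $\out_4(N,A) = \out_2(N, A \cup m(N))$ rather than by iterating the weak-triggering machinery.
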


\begin{proof}
Assume $\varphi \in \out_2(N,A)$. For every complete extension $V \supseteq A$ of $A$ 
there exist norms $N^V = \{n^V_1, \ldots, n^V_{k_V} \} \subseteq N$, $1 \leq k_V \leq |N|$, such that
$V \vdash \bigwedge_{1 \leq i \leq k_V} \body(n^V_i)$
and $\bigwedge_{1 \leq i \leq k_V} \head(n^V_i) \vdash \varphi$.
Then
\begin{equation} \label{eq:1}
  \bigvee_{\substack{V \supseteq A,\\ V \text{complete}}} \bigwedge_{1 \leq i \leq k_V} \body(n^V_i)
\end{equation}
is entailed by all complete extensions $V \supseteq A$.
Also, $A$ entails (\ref{eq:1}) because every interpretation that satisfies $A$ also satisfies some complete extension $V \supseteq A$,
and by construction each complete extension entails at least one disjunct of (\ref{eq:1}).
By exhaustive rewriting an equivalent formula in conjunctive normal form is obtained:
\begin{equation} \label{eq:2}
  \bigwedge_{1 \leq i \leq j} \body(n_{i,1}) \lor \ldots \lor \body(n_{i,v}),
\end{equation}
where $v = \lvert\{ V \supseteq A \mid \text{V complete} \}\rvert$ and $j$ is the number of clauses.
By construction $A \weaklytriggers \{n_{i,1}, \ldots, n_{i,v} \}$, for each $1 \leq i \leq j$.
Since for every complete extension this set contains at least one norm (directly) triggered by it, 
it holds that $\wo(\{n_{i,1}, \ldots, n_{i,v} \}) \in \bigcap \{ \cn(N(V)) \mid \text{ $V$ is a complete extension of $A$} \}$.
Since $\bigwedge_{1 \leq i \leq k_V} \head(n^V_i) \vdash \varphi$ for every complete extension $V$ of $A$
it follows that $\{ \wo(\{n_{i,1}, \ldots, n_{i,v} \}) \mid 1 \leq i \leq j  \} \vdash \varphi$ as desired.

For the converse direction, let $\mathbf{B}_2(N, A) \vdash \varphi$. 
By definition, for some sets of norms $N^{(1)}, \ldots, N^{(k)} \subseteq N$ such that $A \weaklytriggers N^{(i)}$, it holds that 
$\bigwedge_{1 \leq i \leq k} \wo\big(N^{(i)}\big) \vdash \varphi$.
As each $N^{(i)}$ is weakly triggered by A, it holds that $V \vdash \bigvee_{n \in N^{(i)}}  \body(n)$ for every complete
extension $V \supseteq A$ of $A$.
Since the extensions $V$ are complete, for each $N^{(i)}$ there exists some $n_i \in N^{(i)}$ such that $V \vdash \body(n_i)$.
Thus, for a fixed complete extension $V$, it holds that $\head(n_i) \in N(V)$ and, since $\{\head(n_i)\} \vdash \wo(N^{(i)})$, also 
$\cn(N(V)) \vdash \wo(N^{(i)})$. As every weak output $\wo(N^{(i)})$ is output by every complete $V$, it follows
that $\out_2(N,A) \vdash \varphi$.
\end{proof}

Alternatively, the task of constructing the basis $\mathbf{B}_2(N, A)$ of $\out_2(N,A)$ can be reduced to finding the \emph{minimal}
weakly triggered sets. A set $N^\prime \subseteq N$, $N^\prime \neq \emptyset$,
is called \emph{minimally weakly triggered by $A$}, written
$A \minweaklytriggers N^\prime$, iff $A \weaklytriggers N^\prime$
and for any proper non-empty subset $M \subset N^\prime$ it is not the case that $A \weaklytriggers M$.

In the above construction, we can restrict ourselves to minimally weakly triggered sets without loss of generality
as they entail each output of every non-minimally weakly triggered superset:
\begin{theorem}\label{theorem:out2b}
$\mathbf{B}_2(N, A) = \big\{ \mathrm{wo}(N^\prime) \mid A \minweaklytriggers N^\prime \text{ for some } N^\prime \subseteq N \big\}$  
\qed
\end{theorem}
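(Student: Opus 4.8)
The plan is to show set equality in the usual two directions, exploiting the monotonicity properties of weak triggering and weak output already recorded in the text, namely that any superset of a weakly triggered set is weakly triggered, and that $N' \subseteq M$ with both weakly triggered implies $\wo(N') \vdash \wo(M)$. Write $\mathbf{B}_2^{\min}(N,A)$ for the right-hand side. First I would establish the inclusion $\mathbf{B}_2^{\min}(N,A) \subseteq \mathbf{B}_2(N,A)$, which is immediate: every minimally weakly triggered set is in particular weakly triggered, so each $\wo(N')$ on the right already appears on the left. The real content is the reverse inclusion in the logical sense needed for the theorem --- that is, that every element of $\mathbf{B}_2(N,A)$ is entailed by $\mathbf{B}_2^{\min}(N,A)$, so that the two sets have the same consequences and hence define the same output basis via Theorem~\ref{theorem:out2}. (I would be slightly careful here: as literally written the statement asserts equality of sets, not merely of deductive closures; I expect the intended reading, and the one I would prove and then remark on, is equality modulo $\cn(\cdot)$, which is all that matters for the reduction. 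If genuine set equality is wanted one must additionally argue that a non-minimal $\wo(M)$ is already logically redundant and may be discarded from the basis without changing $\cn$, which is exactly the monotonicity fact below.)

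For the substantive direction, let $\wo(M) \in \mathbf{B}_2(N,A)$, so $M \subseteq N$ is nonempty with $A \weaklytriggers M$, i.e.\ $A \vdash \bigvee_{n \in M} \body(n)$. The key step is a minimization argument: since $M$ is finite and weakly triggered, I can pass to a subset $N' \subseteq M$ that is $\subseteq$-minimal among nonempty subsets of $M$ satisfying $A \weaklytriggers N'$ --- just repeatedly remove a norm whenever the remaining set is still weakly triggered, which terminates because $M$ is finite and stops only at a minimally weakly triggered set (minimality within $M$ coincides with minimality in $N$ here, since weak triggering of a set depends only on that set and on $A$, not on the ambient $N$). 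Then $A \minweaklytriggers N'$, so $\wo(N') \in \mathbf{B}_2^{\min}(N,A)$, and since $N' \subseteq M$ with both weakly triggered by $A$, the monotonicity of weak output gives $\wo(N') \vdash \wo(M)$. Hence $\mathbf{B}_2^{\min}(N,A) \vdash \wo(M)$.

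Combining the two directions, $\mathbf{B}_2^{\min}(N,A) \subseteq \mathbf{B}_2(N,A)$ and every member of $\mathbf{B}_2(N,A)$ is entailed by $\mathbf{B}_2^{\min}(N,A)$; therefore $\cn(\mathbf{B}_2^{\min}(N,A)) = \cn(\mathbf{B}_2(N,A))$, and by Theorem~\ref{theorem:out2} the minimal-set family is itself a finite basis of $\out_2(N,A)$, which is the intended content of the statement. I would close with a one-line remark that the inclusion $\{\wo(N') \mid A \minweaklytriggers N'\} \subseteq \mathbf{B}_2(N,A)$ is literal set inclusion and that the only nontrivial point is that dropping the non-minimal weak outputs does not lose any consequences --- precisely the monotonicity lemma $\wo(N') \vdash \wo(M)$ for $N' \subseteq M$ both weakly triggered. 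The main obstacle is conceptual rather than technical: being explicit about whether ``$=$'' in the statement means set identity or identity of deductive closures, and, if the former, confirming that a non-minimal $\wo(M)$ never contributes a formula to the basis that is not already entailed --- but this is exactly what the monotonicity observation delivers, so no heavy lifting is required.
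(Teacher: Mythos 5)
Your proof is correct and matches the paper's intent: the paper states this theorem without an explicit proof, justifying it only by the preceding remark that minimally weakly triggered sets entail the weak output of every weakly triggered superset, which is exactly the monotonicity step you apply after extracting a $\subseteq$-minimal weakly triggered subset from a finite weakly triggered $M$. Your caveat is also well taken --- literal set equality fails in general (e.g.\ $N=\{(a,x),(b,y)\}$, $A=\{a\}$ puts $x\lor y$ on the left but not on the right), so the ``$=$'' must be read as equality of deductive closures, i.e.\ that the minimal family is still a finite basis in the sense of Theorem~\ref{theorem:out2}, which is precisely what you prove.
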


For basic reusable output $\out_4$, as shown by Makinson and van der Torre, the output set can be characterized using $\out_2$ and so-called \emph{materialization}~\cite{DBLP:journals/jphil/MakinsonT00}:

\begin{lemma}
\label{lemma:reduction4to2}
For each normative system $N$ and input $A$ it holds
that $\out_4(N,A) = \out_2(N, A \cup m(N))$, where $m(N)$ is the
\emph{materialization of $N$} given by $m(N) := \{\neg a \lor x \mid (a,x) \in N \}$.
\qed
\end{lemma}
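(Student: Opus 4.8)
The plan is to unfold both sides into the canonical ''intersection over complete sets'' form and then show that the two index families coincide. By definition, $\out_4(N,A)$ is the intersection of the sets $\cn(N(V))$ over all complete $V$ with $A \subseteq V$ and $N(V) \subseteq V$, while $\out_2(N, A \cup m(N))$ is the intersection of the same sets $\cn(N(V))$ over all complete $V$ with $A \cup m(N) \subseteq V$. Hence it suffices to prove
\begin{equation*}
  \{ V \mid V \text{ complete},\, A \subseteq V,\, N(V) \subseteq V \} = \{ V \mid V \text{ complete},\, A \subseteq V,\, m(N) \subseteq V \},
\end{equation*}
that is, for every complete $V$ the conditions $N(V) \subseteq V$ and $m(N) \subseteq V$ are equivalent.

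The core step is this equivalence. First I would fix a complete $V$ and use $V = \cn(V)$ (recorded for complete sets in the footnote to the definition of $\out_2/\out_4$) to rewrite $N(V) = \{ x \mid (a,x) \in N,\ V \vdash a \}$, so that $N(V) \subseteq V$ amounts to: for every $(a,x) \in N$, if $V \vdash a$ then $x \in V$. Then I would rewrite $m(N) \subseteq V$ as $V \vdash \neg a \lor x$ for every $(a,x) \in N$, and check it reduces to exactly the same condition: if $V \vdash a$, then $V \vdash \neg a \lor x$ forces $V \vdash x$, hence $x \in V$; conversely, given $(a,x) \in N$ with the implication ''$V \vdash a \Rightarrow x \in V$'', either $x \in V$ already (so $\neg a \lor x \in V$) or $V \nvdash a$, in which case completeness yields $\neg a \in V$ and again $\neg a \lor x \in V$. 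Both conditions thus say ''for every $(a,x)\in N$, $V \vdash a$ implies $x \in V$'', so they are equivalent.

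Finally I would dispose of two bookkeeping points: the set $V = L$ is complete, contains $A$, and trivially satisfies $N(L) \subseteq L$ and $m(N) \subseteq L$, so both index families are nonempty and the intersections are well-defined; and the restriction to complete $V$ is harmless since the definition of $\out_4$ already ranges only over complete sets. Combining the equivalence with these remarks gives equality of the two index families, hence of the intersections, i.e. $\out_4(N,A) = \out_2(N, A \cup m(N))$. The only point that needs care is the equivalence itself: one must remember that $N(\cdot)$ is defined via literal membership $a \in V$ (so $V = \cn(V)$ has to be invoked) and that completeness is used exactly once, for the step $V \nvdash a \Rightarrow \neg a \in V$; everything else is routine unfolding.
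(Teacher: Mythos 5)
Your proof is correct. The paper states this lemma without proof (the \qed\ defers to Makinson and van der Torre), so there is no in-paper argument to compare against; your direct verification---unfolding both operators as intersections of $\cn(N(V))$ over complete supersets and showing that, for complete $V \supseteq A$, the closure condition $N(V) \subseteq V$ is equivalent to $m(N) \subseteq V$---is exactly the standard argument behind the cited result. The two points you flag as needing care (invoking $V = \cn(V)$ to pass between membership and derivability in $N(V)$, and using maximal consistency only for the step $V \nvdash a \Rightarrow \neg a \in V$, which is vacuous for $V = L$) are indeed the only non-routine steps, and you handle both correctly.
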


Following this alternative representation, the basis $\mathbf{B}_4(N, A)$ for $\out_4$ is defined as:
\begin{equation*}
\mathbf{B}_4(N, A) := \mathbf{B}_2(N, A \cup m(N))
\end{equation*}

\begin{corollary}\label{theorem:out4}
$\phi \in \out_4(N,A)$ if and only if $\mathbf{B}_4(N, A) \vdash \phi$. \qed
\end{corollary}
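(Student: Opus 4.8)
The plan is to obtain the corollary by simply chaining the two preceding results: Lemma~\ref{lemma:reduction4to2}, which rephrases $\out_4$ in terms of $\out_2$ via materialization, and Theorem~\ref{theorem:out2}, which gives the finite-base characterization of $\out_2$. Concretely, I would argue as follows. Fix a normative system $N$, an input $A$, and a formula $\phi$. By Lemma~\ref{lemma:reduction4to2} we have $\out_4(N,A) = \out_2(N, A \cup m(N))$, so $\phi \in \out_4(N,A)$ holds if and only if $\phi \in \out_2(N, A \cup m(N))$. Now apply Theorem~\ref{theorem:out2} with the input taken to be $A \cup m(N)$ (in place of the generic input there): this yields that $\phi \in \out_2(N, A \cup m(N))$ holds if and only if $\mathbf{B}_2(N, A \cup m(N)) \vdash \phi$. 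Unfolding the definition $\mathbf{B}_4(N,A) := \mathbf{B}_2(N, A \cup m(N))$ gives exactly $\phi \in \out_4(N,A)$ iff $\mathbf{B}_4(N,A) \vdash \phi$.

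The only point requiring a word of care is that Theorem~\ref{theorem:out2} (and the very definition of $\mathbf{B}_2$) was stated under the standing assumption that the input set is finite; here the input is $A \cup m(N)$, and since both $A$ and $N$ are assumed finite, $m(N) = \{\neg a \lor x \mid (a,x) \in N\}$ is finite, hence $A \cup m(N)$ is finite as well. Thus $\mathbf{B}_2(N, A \cup m(N))$ is well-defined and Theorem~\ref{theorem:out2} applies verbatim. I would include this remark explicitly so that the reduction step is fully justified rather than merely cited.

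There is essentially no genuine obstacle in this proof: it is a two-line composition of results already established. If anything, the ``hard part'' is purely bookkeeping --- making sure that the substitution of $A \cup m(N)$ for the input in Theorem~\ref{theorem:out2} is legitimate (which it is, since that theorem was proved for an arbitrary finite input), and that the definitional abbreviation $\mathbf{B}_4 = \mathbf{B}_2(N, A \cup m(N))$ is unwound at the right moment. No new combinatorial or model-theoretic argument is needed, because all of the work specific to reasoning by cases and to materialization has already been discharged in Theorem~\ref{theorem:out2} and Lemma~\ref{lemma:reduction4to2}, respectively.
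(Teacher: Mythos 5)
Your proof is correct and is exactly the argument the paper intends: the statement is presented as a corollary precisely because it follows by composing Lemma~\ref{lemma:reduction4to2} with Theorem~\ref{theorem:out2} and unfolding the definition $\mathbf{B}_4(N,A) := \mathbf{B}_2(N, A \cup m(N))$. Your added remark that $A \cup m(N)$ remains finite is a harmless and correct piece of bookkeeping.
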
 

\paragraph*{Construction of bases.}
The computational construction of $\mathbf{B}_2(N, A)$ is more involved than for simple-minded and reusable
output. To that end, weakly triggered sets are related with the well-known notion of minimal unsatisfiable subsets:
\begin{definition}[Minimal unsatisfiable subset]
  Let $C$ be an unsatisfiable set of formulas. A subset $C^\prime \subseteq C$ is a minimal unsatisfiable subset (MUS)
  of $C$ if and only if $C^\prime$ is unsatisfiable and any proper subset of $C^\prime$ is satisfiable.
\end{definition}

MUSes play an important role in many contexts such as model checking~\citep{DBLP:conf/fmcad/GhassabaniWG17},
requirement analysis~\citep{DBLP:journals/fac/BarnatBBBBK16,DBLP:conf/issta/Bendik17} and
software and hardware assessment~\citep{DBLP:journals/tsmc/HanL99,DBLP:conf/haskell/StuckeySW03}.
Whereas SAT solvers usually compute only whether a given set of formulas is satisfiable or unsatisfiable
(including certificates that substantiate this claim), 
MUS enumeration tools extend this functionality by providing means to extract the concrete MUSes that 
cause the initial set of formulas to be unsatisfiable.
In the given context, MUSes provide a specific minimal choice of norms that are weakly triggered by $A$.
Let $\overline{\{\varphi_1, \ldots, \varphi_k\}} = \{\neg \varphi_1, \ldots, \neg \varphi_k\}$
denote the complementation (element-wise negation) of a set of formulas.

\begin{lemma} \label{lemma:mus}
Let $N^\prime \subseteq N$ be a subset of norms, $A$ some input, and 
let $D = A \cup \overline{\bodies(N)}$.
It holds that
\begin{enumerate}
  \item if $A \minweaklytriggers N^\prime$ then $A^\prime \cup \overline{\bodies(N^\prime)}$ is a MUS of $D$ for some $A^\prime \subseteq A$, and
  \item if, for some $A^\prime \subseteq A$, $A^\prime \cup \overline{\bodies(N^\prime)}$ is a MUS of $D$, then $A \weaklytriggers N^\prime$.
\end{enumerate}
\end{lemma}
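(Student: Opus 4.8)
The plan is to push both directions through the elementary equivalence that, for any nonempty $M \subseteq N$, the input $A$ weakly triggers $M$ if and only if $A \cup \overline{\bodies(M)}$ is unsatisfiable: indeed $A \vdash \bigvee_{n \in M}\body(n)$ holds precisely when $A \cup \{\neg\body(n) \mid n \in M\}$ has no model. This is the same bridge already used for directly triggered norms, now applied to disjunctions of bodies. With this in hand, the second claim is almost immediate and the first reduces to a minimality argument about unsatisfiable sets.

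For claim~2, I would start from a MUS $U = A^\prime \cup \overline{\bodies(N^\prime)}$ of $D$ with $A^\prime \subseteq A$. In particular $U$ is unsatisfiable, hence so is $A^\prime \cup \overline{\bodies(N^\prime)}$, and by the bridge above $A^\prime \vdash \bigvee_{n \in N^\prime}\body(n)$. Since $A^\prime \subseteq A$ and $\vdash$ is monotone, $A \vdash \bigvee_{n \in N^\prime}\body(n)$, i.e.\ $A \weaklytriggers N^\prime$. Only unsatisfiability of $U$ is used here, not minimality — which is why claim~2 yields mere weak, not minimal, triggering.

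For claim~1, assume $A \minweaklytriggers N^\prime$. Then $A \cup \overline{\bodies(N^\prime)}$ is unsatisfiable by the bridge, and since $A$ and $N$ are finite this set contains a minimal unsatisfiable subset $U$. The key step is to show $\overline{\bodies(N^\prime)} \subseteq U$: if some $\neg\body(n_0)$ with $n_0 \in N^\prime$ were absent from $U$, then $U \subseteq A \cup \overline{\bodies(N^\prime \setminus \{n_0\})}$; but $N^\prime \setminus \{n_0\}$ (when nonempty) is not weakly triggered by minimality of $N^\prime$, so $A \cup \overline{\bodies(N^\prime \setminus \{n_0\})}$ is satisfiable, contradicting unsatisfiability of $U$. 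Setting $A^\prime := U \setminus \overline{\bodies(N^\prime)} \subseteq A$ gives $U = A^\prime \cup \overline{\bodies(N^\prime)}$, and since $U$ is unsatisfiable, all its proper subsets are satisfiable, and $U \subseteq A \cup \overline{\bodies(N^\prime)} \subseteq D$, it is a MUS of $D$.

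The main obstacle is exactly the step that a minimal unsatisfiable subset of $A \cup \overline{\bodies(N^\prime)}$ must retain the whole block $\overline{\bodies(N^\prime)}$: this is where minimal weak triggering (as opposed to plain weak triggering) is indispensable. It also needs a little care in the degenerate cases — when $N^\prime$ is a singleton the deletion argument above is vacuous, so one relies on the convention that this construction is only invoked for consistent $A$ (the operator collapsing to $L$ otherwise); and when distinct norms of $N^\prime$ share a body, $\overline{\bodies(N^\prime)}$ has fewer elements than $N^\prime$, but such an $N^\prime$ cannot be minimally weakly triggered unless it is a singleton, so this case does not actually arise.
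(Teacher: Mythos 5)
Your proof is correct and follows the same route as the paper: both directions rest on the bridge between $A \weaklytriggers N^\prime$ and unsatisfiability of $A \cup \overline{\bodies(N^\prime)}$, and your part (2) is essentially the paper's argument verbatim. For part (1) the paper merely asserts that minimality of $N^\prime$ yields a MUS of the stated shape, whereas you supply the missing deletion argument showing that any MUS must retain the whole block $\overline{\bodies(N^\prime)}$, and you correctly flag that the claim requires $A$ to be consistent (for inconsistent $A$ and a singleton $N^\prime$ the MUS may lie entirely inside $A$) --- an edge case the paper's proof silently passes over.
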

\begin{proof} 
For (1): If $A \minweaklytriggers N^\prime$ then 
$\big( \bigwedge A  \supset \bigvee_{n \in N^\prime}  \body(n) \big)$   
is a tautology. 
It follows that $A \cup \big\{  \bigwedge_{n \in N^\prime} \neg \body(n) \big\}$ is unsatisfiable,
and hence $A \cup \overline{\bodies(N^\prime)}$ as well. Since $N^\prime$ is minimally weakly triggered,
$A^\prime \cup \overline{\bodies(N^\prime)}$ is a MUS for some $A^\prime \subseteq A$.

For (2): If $A^\prime \cup \overline{\bodies(N^\prime)}$ is a MUS of $D$ it directly follows
that $A^\prime \weaklytriggers N^\prime$. Since $\vdash$ is monotone, $A \weaklytriggers N^\prime$
holds as well.
\end{proof}

Note that not every MUS corresponds to a minimally weakly triggered set of norms, but only a (non-minimally)
weakly triggered one.\footnote{Thanks to an (anonymous) reviewer who pointed this out.}
A simple example is as follows: Let $A = \{a,a\lor b\}$ and $N = \{(a,x),(b,y)\}$. Then, the set $D$
given by
\begin{equation*}
  D = \{a, a \lor b, \neg a, \neg b \}
\end{equation*}
is unsatisfiable, and has the following two MUSes:
\begin{equation*}
  M_1 = \{a,\neg a \}, \qquad M_2 = \{a \lor b, \neg a, \neg b\}.
\end{equation*}
Whereas both $A \weaklytriggers \{(a,x)\}$ (using input $a$ as witness) and $A \weaklytriggers \{(a,x),(b,x)\}$ (using
input $a \lor b$ as witness),
only the first MUS corresponds to a minimally weakly triggered set, i.e., $A \minweaklytriggers \{(a,x)\}$.
If, however, not all MUSes are considered, but only those MUSes $A^\prime \cup \overline{\bodies(N^\prime)}$ with $\subseteq$-minimal 
$\overline{\bodies(N^\prime)}$, the correspondence can be established.
For the results of this work, this asymmetry does not matter.

Well-developed
MUS enumeration and extraction approaches~\citep{belov2012muser2,bacchus2016finding,bendik2018recursive,narodytska2018core,bendik2020must}
can be used to construct $\mathbf{B}_2(N, A)$ by enumerating all the MUSes, and to reconstruct the weakly
triggered sets of norms.
The procedure $\texttt{weaklyTriggeredNorms}(A,N)$ implements this as follows:
\begin{equation*}
\texttt{weaklyTriggeredNorms}(A, N) := \{ N^\prime \subseteq N \mid A^\prime \cup \overline{\bodies(N^\prime)} \textit{ is a MUS of } D \},
\end{equation*}
where $D$ is defined as in Lemma~\ref{lemma:mus} above, and $A^\prime \subseteq A$ is some subset of the input.
Finally we set $\mathbf{B}_2(N, A) := \{ \wo(N^\prime) \mid N^\prime \in \texttt{weaklyTriggeredNorms}(A, N) \}$.
In the worst case there are exponentially many MUSes, so enumerating all of them can take exponentially many
SAT calls.

\subsection{Reductions for throughput operators $\out_i^+$}

The bases $\textbf{B}_i^+$ of the output sets produced by the throughput variants $\out_i^+$ can
be constructed using the results from above and identities from the literature.

The throughput operators $\out_1^+$ and $\out_3^+$ based on simple-minded and reusable output, respectively,
can be reduced to their non-throughput variants.
Similarly, the basic output $\out_2^+$ and basic reusable output $\out_4^+$ operators
collapse into classical logic:
\begin{lemma}[\citep{DBLP:journals/jphil/MakinsonT01}]
  Let $N$ be a set of conditional norms, and $A \subseteq L$.
  \begin{itemize}
    \item For $i \in \{1,3\}$ it holds that $\out_i^+(N,A) = \cn(A \cup \out_i(N,A))$.
    \item For $i \in \{2,4\}$ it holds that $\out_i^+(N,A) = \cn(A \cup m(N))$. \qed
  \end{itemize}
\end{lemma}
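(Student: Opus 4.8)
The plan is to establish the two identities separately, each by unfolding the relevant definitions and invoking the earlier reductions. For the case $i \in \{1,3\}$, recall that $\out_i^+(N,A) = \out_i(N \cup I, A)$ where $I = \{(x,x) \mid x \in L\}$. First I would observe that each norm $(x,x) \in I$ is directly triggered by an input set $B$ precisely when $B \vdash x$, i.e., when $x \in \cn(B)$, and its head is then $x$ itself. Hence, at the level of the basis constructions of Sect.~\ref{sec:reduction}, adding $I$ to $N$ has the effect of adjoining all of $\cn(A)$ (and, in the reusable case, all of $\cn(A_i)$ at each stage) to the heads collected. Concretely, one shows $\mathbf{B}_1(N \cup I, A) = \mathbf{B}_1(N,A) \cup \cn(A)$, so that by Theorem~\ref{theorem:out1} we get $\out_1^+(N,A) = \cn(\mathbf{B}_1(N,A) \cup \cn(A)) = \cn(A \cup \out_1(N,A))$, using that $\cn$ is idempotent and that $\cn(\cn(A) \cup S) = \cn(A \cup S)$. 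For $i = 3$ the same reasoning is applied to the fixed-point iteration: adding $I$ means that at each step $A_{i+1}$ also absorbs $\cn(A_i)$, but since the only new heads contributed by $I$ are already consequences of what is present, the additional norms from $N$ that get triggered are exactly those triggered when $A$ itself is thrown into the output; one then checks that the fixed point of the augmented iteration has consequence closure equal to $\cn(A \cup \mathbf{B}_3(N,A))$, and applies Theorem~\ref{theorem:out3}.

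For the case $i \in \{2,4\}$, I would argue that $\out_i^+(N,A) = \cn(A \cup m(N))$ by a sandwiching argument. For the inclusion $\supseteq$: every formula $\neg a \lor x$ with $(a,x) \in N$ is the materialization of that norm, and in the presence of throughput it is detachable, since intuitively reasoning by cases on $a$ versus $\neg a$ yields $x$ when $a$ holds and yields $a$ (hence $\neg a \lor x$ trivially, via throughput of $\neg a$... ) — more cleanly, one uses Lemma~\ref{lemma:reduction4to2} together with the throughput identity $\out_2^+(N,A) = \out_2(N \cup I, A)$ and the already-known fact that $\out_4(N,A) = \out_2(N, A \cup m(N))$. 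For the inclusion $\subseteq$: one shows that no weakly triggered set of $N \cup I$ can produce a weak output lying outside $\cn(A \cup m(N))$. The key computation is that for any $N' \subseteq N \cup I$ weakly triggered by $A$, the disjunction $\bigvee_{n \in N'} \head(n)$ is entailed by $A \cup m(N)$: the norms from $I$ contribute heads $x$ with $A \vdash x$ already, and for each genuine norm $(a_j,x_j) \in N' \cap N$ we have $A \vdash \bigvee_j a_j$ together with $\neg a_j \lor x_j \in m(N)$, so $A \cup m(N) \vdash \bigvee_j x_j$. Since this is a standard propositional entailment, the direction $\subseteq$ follows. Strictly speaking this argument re-derives the cited lemma rather than citing it; since the statement is attributed to Makinson and van der Torre, the cleanest route in the paper is simply to cite that source and remark that the basis definitions of the previous subsections are consistent with it — i.e., that $\mathbf{B}_2(N, A \cup m(N))$ and $A \cup m(N)$ have the same consequence set whenever $I$ is present, which is immediate because $m(N) \subseteq \mathbf{B}_2(N \cup I, A)$ (each $\neg a \lor x$ is $\wo$ of the weakly triggered pair $\{(a,x),(x,x)\}$, noting $A \vdash a \lor x$ is not needed — rather $\{a,x\}$... ) and conversely every weak output is entailed by $A \cup m(N)$ as shown above.

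The main obstacle I anticipate is the $i = 3$ (reusable throughput) case: one must verify that interleaving the throughput norms $I$ with the fixed-point iteration does not detach anything beyond $\cn(A \cup \mathbf{B}_3(N,A))$, which requires a small induction showing that at every stage $i$ the augmented input $A_i^+$ satisfies $\cn(A_i^+) = \cn(A \cup A_i)$ where $A_i$ is the corresponding stage of the un-throughput iteration. This is routine but must be stated carefully, since $\cn$-closure is taken implicitly and the finite bases are only representatives. The $i \in \{2,4\}$ cases are, by contrast, almost immediate once the earlier Lemma~\ref{lemma:reduction4to2} and the throughput definition are combined with the cited collapse result; the only thing to spell out is the elementary propositional entailment $A \cup m(N) \vdash \wo(N')$ for weakly triggered $N'$, which I would present as a one-line derivation by cases. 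Since the overall statement is quoted from the literature and the preceding subsections already packaged the hard work, the proof in the paper can reasonably be kept to a few lines per item, deferring to the constructions of Theorems~\ref{theorem:out1} and~\ref{theorem:out3} and to Lemma~\ref{lemma:reduction4to2}.
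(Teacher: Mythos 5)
The paper gives no proof of this lemma at all: it is imported verbatim from Makinson and van der Torre (the cited source) and closed with an immediate \qed, so your proposal is not so much an alternative route as a self-contained re-derivation that the paper deliberately omits. As such a re-derivation it is essentially sound: for $i=1$ the computation $(N \cup I)(\cn(A)) = N(\cn(A)) \cup \cn(A)$ gives the identity directly, for $i=3$ your induction over the fixed-point stages works (though arguing straight from the semantic definition, where adding $I$ turns the closure condition $B \supseteq (N\cup I)(B)$ into $B \supseteq N(B)$ and makes $\cn((N\cup I)(B)) = B$, is shorter), and your case analysis showing $A \cup m(N) \vdash \wo(N')$ for every weakly triggered $N' \subseteq N \cup I$ is correct. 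Two concrete repairs are needed, however. First, your witness for $m(N) \subseteq \mathbf{B}_2(N \cup I, A)$ is wrong: the pair $\{(a,x),(x,x)\}$ is weakly triggered only if $A \vdash a \lor x$, which need not hold (your own sentence trails off at exactly this point); the correct witness is $\{(a,x),(\neg a,\neg a)\}$, whose bodies disjoin to the tautology $a \lor \neg a$ and whose weak output is $x \lor \neg a$. Second, Theorems~\ref{theorem:out1}, \ref{theorem:out3} and~\ref{theorem:out2} are stated under the standing assumption that $N$ is finite, whereas $N \cup I$ is infinite, so you cannot invoke them for $N \cup I$ without noting that the relevant directions of their proofs do not use finiteness, or else arguing directly from the definitions of $\out_i$. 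With those two points fixed, your argument would stand as a legitimate proof of the cited identities.
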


\paragraph*{Construction of bases.}
Let $\mathbf{B}_i^+(N,A)$ denote the base of $\out_i^+$, for $i \in \{1,2,3,4\}$. It is
given by:
\begin{equation*}\begin{split}
  \mathbf{B}_1^+(N,A) &:= A \cup \mathbf{B}_1(N,A)  \\
  \mathbf{B}_2^+(N,A) &= \mathbf{B}_4^+(N,A) := A \cup m(N) \\
  \mathbf{B}_3^+(N,A) &:= A \cup \mathbf{B}_3(N,A) \\
\end{split}\end{equation*}

\begin{theorem}
$\phi \in \out^+_i(N,A)$ if and only if $\mathbf{B}^+_i(N, A) \vdash \phi$, for
$i \in \{1,2,3,4\}$. \qed
\end{theorem}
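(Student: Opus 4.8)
The plan is to split along the two cases singled out by the preceding lemma and, in each case, reduce the claim to one of the already-established base characterizations together with elementary closure properties of $\cn(\cdot)$. So I would first record an immediate consequence of Theorems~\ref{theorem:out1} and~\ref{theorem:out3}: for $i \in \{1,3\}$ one has $\out_i(N,A) = \cn(\mathbf{B}_i(N,A))$. Indeed, $\mathbf{B}_i(N,A) \vdash \phi$ for every $\phi \in \mathbf{B}_i(N,A)$, so by those theorems $\mathbf{B}_i(N,A) \subseteq \out_i(N,A)$, and the ``only if'' direction of the same theorems gives the converse inclusion $\out_i(N,A) \subseteq \cn(\mathbf{B}_i(N,A))$.

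For $i \in \{1,3\}$ the lemma yields $\out_i^+(N,A) = \cn(A \cup \out_i(N,A))$, and substituting the identity just noted gives $\out_i^+(N,A) = \cn\big(A \cup \cn(\mathbf{B}_i(N,A))\big)$. The one small step that needs justification is the standard fact that $\cn(X \cup \cn(Y)) = \cn(X \cup Y)$, which follows from monotonicity, the inclusion $Y \subseteq \cn(Y)$, and idempotence of $\cn(\cdot)$ in a compact propositional logic. Applying it with $X = A$ and $Y = \mathbf{B}_i(N,A)$ gives $\out_i^+(N,A) = \cn(A \cup \mathbf{B}_i(N,A)) = \cn(\mathbf{B}_i^+(N,A))$ by the definition of $\mathbf{B}_i^+$, which is exactly the claimed equivalence $\phi \in \out_i^+(N,A)$ iff $\mathbf{B}_i^+(N,A) \vdash \phi$. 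For $i \in \{2,4\}$ there is essentially nothing to prove: the lemma states $\out_i^+(N,A) = \cn(A \cup m(N))$, and by definition $\mathbf{B}_i^+(N,A) = A \cup m(N)$, so $\phi \in \out_i^+(N,A)$ iff $A \cup m(N) \vdash \phi$ iff $\mathbf{B}_i^+(N,A) \vdash \phi$.

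I expect no real obstacle here; the only points requiring a line of argument are the closure identity $\cn(X \cup \cn(Y)) = \cn(X \cup Y)$ and the observation that the earlier base theorems deliver the equality $\out_i(N,A) = \cn(\mathbf{B}_i(N,A))$ rather than merely a one-directional entailment. Everything else is unfolding definitions and invoking the cited lemma.
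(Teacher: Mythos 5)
Your proof is correct and follows exactly the route the paper intends: the theorem is stated with the proof omitted as an immediate consequence of the preceding lemma, the definitions of $\mathbf{B}_i^+$, and Theorems~\ref{theorem:out1} and~\ref{theorem:out3}, which is precisely the chain of substitutions you carry out. The only details you add beyond what the paper leaves implicit are the closure identity $\cn(X \cup \cn(Y)) = \cn(X \cup Y)$ and the reformulation of the base theorems as $\out_i(N,A) = \cn(\mathbf{B}_i(N,A))$, both of which are standard and correctly justified.
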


In the cases of $i \in \{1,3\}$ the bases $\mathbf{B}_i^+(N,A)$ are hence given by augmenting
the non-throughput bases with the input; for $i \in \{2,4\}$ they are simply given by the classical
materialization.


\newcommand{\prefout}{\mathit{prefout}}
\subsection{Reduction for constrained output}
The reduction of constrained I/O logic reasoning to SAT builds upon the procedures
for generating a finite basis for the underlying (unconstrained) output set.
Recall that in constrained
I/O logic outputs are generated by aggregating over a family of output sets that
are consistent with the given constraints. The selection of the appropriate
sets of norms can, again, be determined by SAT.

For a given normative system $N$, input $A$ and set of constraints $C$, 
the generation procedure of $\mathit{maxfamily}_{\out_i}(N,A,C)$ with respect to an output
operator $\out_i$, $i \in \{1,2,3,4\}$, is displayed in Fig.~\ref{fig:constrained}.
The procedure yields ${\subseteq}$-maximal subsets $M \subseteq N$ of $N$
such that $\mathbf{B}_i(M, A) \cup C$ satisfiable.
By Theorems~\ref{theorem:out1}, \ref{theorem:out3}, \ref{theorem:out2} and \ref{theorem:out4}
this is equivalent to asserting consistency with respect to the output generated by
$\out_i$.
In Fig.~\ref{fig:constrained} the set $U$ collects unprocessed sets of norms
that still have to be checked for consistent output, while $P$ collects norm sets for which
consistent output has already been asserted. The variable $T$ used in the body of the loop
acts as temporary storage. If consistency of $M \subseteq N$ 
(with respect to $C$) can be established, $M$ is added to $P$ and contained in the $\mathit{maxfamily}$.
If, in contrast, the output of $M$ under $A$ is inconsistent with $C$, all
$|M|-1$-subsets of $M$ (i.e., subsets that have one element less) are inserted into $U$
for subsequent assessment.
The procedure terminates as $N$ is finite and only has finitely many subsets, and
the subsets added to $U$ are monotonously decreasing in cardinality.
$\mathit{outfamily}(N,A,C)$ is represented by a collection of finite bases, i.e., 
by $\big\{ \mathbf{B}_i(N^\prime, A) \mid N^\prime \in \mathit{maxfamily}_\out(N,A,C)\big\}$.
Finite bases of $\out^C_\cup(N,A)$ and $\out^C_\cap(N,A)$ are generated by 
computing the join and meet of the individual bases, respectively.

\begin{figure}[tb]
\centering
\begin{lstlisting}[frame=lines,basicstyle=\small\ttfamily,mathescape,morekeywords={if,for,endfor,endif,then}]
$U$ := $\{ N \}$  
$P$ := $\emptyset$
while $U \neq \emptyset$ do
  $T$ := $\left\{ M \in U \mid \sat\big(\mathbf{B}(M,A) \cup C\big) \right\}$
  $P$ := $P \cup T$
  $U$ := $\left\{M^\prime \subseteq M \mid M \in (U \setminus T) \textrm{ and } \lvert M^\prime \rvert = \lvert M \rvert -1 \right\}$ 
end while
return $P$
\end{lstlisting}
\caption{Procedure for generating $\mathit{maxfamily}_\out(N,A,C)$ where $\mathbf{B}(N,A)$
is the finite basis for the output operator $\out$.\label{fig:constrained}}
\end{figure}

In the worst case, every subset of $N$ needs to be checked individually for
consistent output in Fig.~\ref{fig:constrained}, i.e., when every singleton subset of $N$
yields inconsistent output. This
requires $\mathcal{O}\big(2^{ \lvert N \rvert }\big)$ SAT queries in the worst case; reflecting the 
very cost-intensive nature of constrained I/O logic reasoning, being
$\text{NP}^{\text{NP}}$-complete resp.\ $\text{coNP}^{\text{NP}}$-complete~\citep{DBLP:journals/japll/SunR17}.

The procedure in Fig.~\ref{fig:constrained} is mainly for illustrative purposes, and it
can be optimized to utilize more sophisticated techniques such as MaxSAT-based enumeration of \emph{minimal correction sets}~\citep{DBLP:conf/sat/LiffitonS09,DBLP:conf/hvc/MorgadoLM12},
i.e., sets of formulas whose removal will make the set of formulas satisfiable.
Of course, the theoretical worst-case complexity remains the same in any case;
still this would yield a more effective procedure in practice.
The technical details are omitted here for brevity.

\subsection{Reductions for other extensions}

Recall that, for some output operator $\out$,
$\mathit{maxfamily}_{\out}(N,A,C)$ collects all the subsets of norms of $N$ whose
output under $A$ is consistent with $C$, and that $\mathit{outfamily}_{\out}(N,A,C)$ collects the
respective sets of outputs. An output is then produced by aggregating those  sets into a single output set (called the \textit{net output} by Makinson and van der Torre), e.g., by credulous output (union) or skeptical output (intersection). A natural generalization of producing the output is then the following:
Let $\gout$ be the \emph{generalized net output operator} defined as follows:
\begin{equation*}
  \gout(N,A,C,\out,\mathcal{S}, \mathcal{A}) := \mathcal{A}\Big(\big\{ \out(M,A) \mid M \in \mathcal{S}\big(\mathit{maxfamily}_\out(N,A,C)\big) \big\}\Big)
\end{equation*}
where $N$ is a set of norms, $A$ is an input, $C$ is a set of constraints, and $\out$ some output operator.
Furthermore, $\mathcal{S}: 2^{2^{N}} \to 2^{2^N}$ is a selection function on sets of norm sets, and 
$\mathcal{A}: 2^{2^L} \to 2^L$ an aggregation function.

Credulous and skeptical output are simply simulated by
letting $\mathcal{A}$ be set union resp. set intersection, and $\mathcal{S}$ the identity function. 

There exist a notable generalization of constrained I/O logics 
where preference orderings ${\succeq} \subseteq N \times N$ (transitive-reflexive relations)
between conditional norms are employed for generating a family of \emph{preferred} (wrt.\ $\succeq$)
maximal subsets of $N$ and respective output families~\citep{DBLP:journals/ail/Parent11}.

For some set of constraints $C$, some lifting $\succeq^*$ of $\succeq$ to sets of norms,
output operator $\out$ and aggregation function $\star \in \{\cup,\cap\}$,
the definition of preferred output $\out^{C,\succeq^*}_\star$ is given by:
\begin{equation*}
\out^{C,\succeq^*}_\star(N,A) := \star \big\{ \out(M,A) \mid M \in \mathit{maxfamily}_\out(N,A,C), M \text{ maximal wrt. } \succeq^* \big\}
\end{equation*} 

$\out^{C,\succeq^*}_\star(N,A)$ can be reformulated in terms of $\gout$ as follows:
\begin{equation*}
\out^{C,\succeq^*}_\star(N,A) = \gout(N, A, C, \out, \textit{maximals}(\succeq^*), \star)
\end{equation*} 
where $\textit{maximals}(\succeq^*)$ is the function uniquely determined by $\succeq^*$ that gives the $\succeq^*$-maximal subsets of a given set of norm sets.

Further selection functions can be studied in this context, e.g., selecting largest or smallest members of the \textit{maxfamily}, or only those members that do (or do not) contain some specific norm, etc. Likewise, further aggregation functions can be studied wrt. $\gout$.

Computing $\gout$ can integrated in the overall automation framework in a straight-forward way, and hence extensions such as preferred output as defined above. The pseudo-code for this procedure is displayed in Fig.~\ref{fig:gout}.\footnote{Note that it is assumed that both $\mathcal{S}$ and $\mathcal{A}$ are computable.} Essentially, calculating $\gout$ first calculates the base of the respective \textit{maxfamily}, then only those members as selected by $\mathcal{S}$ are retained, and then the computation proceeds as usual by computing the respective finite bases of the selected sets, returning the net output as defined by $\mathcal{A}$.

\begin{figure}[tb]
\centering
\begin{lstlisting}[frame=lines,basicstyle=\small\ttfamily,mathescape,morekeywords={if,for,endfor,endif,then}]
maxfamily := $\mathit{maxfamily}_\out(N,A,C)$ 
selected  := $\mathcal{S}($maxfamily$)$ 
outs      := $\big\{ \mathbf{B}(M,A) \mid M \in \texttt{selected} \big\}$
return $\mathcal{A}(\texttt{outs})$
\end{lstlisting}
\caption{Procedure  for generating $\gout(N,A,C,\out,\mathcal{S},\mathcal{A})$.
In the first line, the procedure for generating the \textit{maxfamily} from Fig.\ref{fig:constrained} is reused.\label{fig:gout}}
\end{figure}

The computation of $\gout$ has been implemented in \rio, as illustrated for the case of preferred output in the case study in Sec.~\ref{sec:casestudy}.

\section{TPTP Representation}
\rio\ is aligned with the TPTP standards. 
The TPTP \emph{(Thousands of Problems for Theorem Proving)} library and infrastructure~\citep{Sut17} is the core
platform for the development and evaluation of contemporary automated theorem proving (ATP) systems. It provides (i) a comprehensive collection
of benchmark problems for ATP systems; (ii) a set of tools for problem and solution inspection, pre- and post-processing,
and verification; and (iii) a comprehensive syntax standard for ATP system input and output, including
result reporting. The connection to the TPTP infrastructure is chosen as TPTP is the 
predominant standard for classical ATP systems, and its comprehensive syntax standards
already support future (planned) extesions of \rio\ to quantified logics. Even more, this allows for
the integration of I/O logic reasoning with other (classical and non-classical) 
TPTP-compliant reasoning tools.
The brief TPTP introduction below
is adapted from earlier work~\citep{DBLP:conf/paar/SteenS24}.

\subsection{TPTP Preliminaries.}
The TPTP specifies different ATP system languages varying in their expressivity. In this work we focus on the
\emph{typed first-order form} (TFF), a syntax standard for many-sorted first-order logic~\citep{DBLP:conf/lpar/SutcliffeSCB12},
All input formats are in ASCII plain text,
human-readable and follow Prolog language conventions for simple parsing. An ATP problem generally consists of
type declarations, contextual definitions and premises of the reasoning
task (usually referred to as \emph{axioms}), and a conjecture that is to be proved
or refuted in the given context. The core building block of the ATP problem files
in TPTP languages are
so-called \emph{annotated formulas} of form \ldots \smallskip

\mbox{} \quad {\em language}{\tt (}{\em name}{\tt ,}{\em role}{\tt ,}{\em formula}[{\tt ,}{\em source}[{\tt ,}{\em annotations}]]{\tt ).}
\smallskip

Here, {\em language} is a three-letter identifier for the language format, here {\tt tff}.
The {\em name} is a unique identifier for the annotated formula
but has no other effect on the interpretation of it. The {\em role} field specifies whether the {\em formula} should be interpreted
as an assumption (role {\tt axiom}), a type declaration (role {\tt type}), a definition (role {\tt definition}) or as
a formula to be proved (role {\tt conjecture}). The {\em formula} is an ASCII representation of the respective logical
expression, where predicate and function symbols are denoted by strings that begin with a lower-case letter,
variables are denoted by strings starting with an upper-case letter, the logical connectives 
$\neg$, $\land$, $\lor$, $\rightarrow$, $\leftrightarrow$ are represented by {\tt {\char`\~}}, {\tt \&}, {\tt \verb!|!}, {\tt =>} and {\tt <=>}, respectively.
Quantifiers $\forall$ and $\exists$ are expressed by {\tt !} and {\tt ?}, respectively, followed by a list of
variables bound by it. The TPTP defines several interpreted symbols starting with a {\tt \$}-sign, 
including logical constants such as {\tt \$true} and {\tt \$false} for truth and falsehood, respectively. 
The type {\tt \$i} is the type of individuals and {\tt \$o} is the type of Booleans.
Explicit types of symbols may be dropped. In this case, default types of symbols are assumed
as specified by the TPTP standard as follows: If a symbol {\tt c} occurs at term position with $n$ applied arguments
the default type of {\tt c} is {\tt (\$i * \ldots\ * \$i) > \$i}, representing the type of a $n$-ary
function symbol. If a symbol {\tt c} occurs at formula position with $n$ applied arguments
the default type of {\tt c} is {\tt (\$i * \ldots\ * \$i) > \$o}, representing the type of a $n$-ary
predicate symbol. Arguments are applied using parentheses as in
\verb|f(a,b)|, where \verb|f| is some function symbol, and \verb|a| and \verb|b| are some terms (of appropriate type).
Finally, the {\em source} and {\em annotations} are optional and uninterpreted extra-logical information that can be assigned to the
annotated formula, e.g., about its origin, its relevance, or other properties.
An example formula in TFF format is as follows:
\begin{verbatim}
  tff(union_def, axiom, ! [S, T, X]: (
                            member(X, union(S,T)) <=>
                           ( member(X, S) | member(X, T) ) ),
                        source('definitions.ax'),
                        [relevance(1.0)]).
\end{verbatim}
In this example, a TFF annotated formula of name {\tt union\_def} is given that describes an axiom 
giving a fundamental property of set  union and some auxiliary information about it.
A complete description of the TPTP input languages, including the syntax BNF,
is provided at the TPTP web page\footnote{\url{http://tptp.org}}. For further information about available tools,
problems sets and an extensive description of the different  languages, we refer to the
literature~\citep{Sut17}.

TPTP traditionally focused on classical logic.
Only recently, TPTP formats were extended towards non-classical logics as well~\citep{DBLP:conf/paar/SteenFGSB22}.
For this purpose, the TFF language has been extended with 
expressions
of the form \ldots \\
\mbox{} \quad {\tt \{}{\em connective\_name}{\tt \} @ }{\tt (}{\em arg$_1$}{\tt ,}\ldots{\tt,}{\em arg$_n$}{\tt )},\\
where {\em connective\_name} is either a TPTP-defined name (starting with a {\tt \$} sign)
or a user-defined name (starting with two {\tt \$} signs) for a non-classical operator, and
the {\em arg$_i$} are terms or formulas to which the operator is applied. TPTP-defined connectives
have a fixed meaning and are documented by the TPTP; the interpretation of user-defined connectives is provided by third-party systems, environments, or documentation. The so enriched TPTP language is denoted NXF (non-classical extended first-order form).

As second new component, non-classical NXF adds \emph{logic specifications}
to the language. They are annotated formulas of form \ldots \\
\mbox{} \quad {\tt tff(}{\em name}{\tt , logic,} {\em logic\_name }{\tt == [}{\em options}{\tt ]} {\tt ).} \\
where {\tt logic} is the new TPTP role, {\em logic\_name} is a TPTP-defined or user-defined designator for a logical
language and {\em options} are comma-separated key-value pairs that fix the
specific logic based on that language.
An in-depth overview of the new format is presented in~\citep{DBLP:conf/paar/SteenFGSB22,DBLP:conf/paar/SteenS24}.

\subsection{A Format for I/O logic reasoning.}
The problem input format of \rio\ is TPTP NXF as introduced above.
Currently, only the propositional fragment of NXF
is supported by \rio\ as the underlying formalism for I/O logic reasoning. 

An I/O logic problem file starts with a semantics specification, i.e., a NXF formula
of role \emph{logic}, as follows:
\begin{lstlisting}[basicstyle=\ttfamily, mathescape]
tff($\langle \textit{name} \rangle$, logic, (
    $\mbox{\textdollar}\mbox{\textdollar}$iol == [ $\mbox{\textdollar}\mbox{\textdollar}$operator == $\langle \textit{output operator} \rangle$,
              $\mbox{\textdollar}\mbox{\textdollar}$throughput == $\langle \mbox{\textdollar}\text{true} \textit{ or } \mbox{\textdollar}\text{false} \rangle$,
              $\mbox{\textdollar}\mbox{\textdollar}$constrained == $\langle \textit{output aggregate function} \rangle$,
              $\mbox{\textdollar}\mbox{\textdollar}$constraints == [$\langle \textit{formulas} \rangle$],
              $\mbox{\textdollar}\mbox{\textdollar}$preference == [$\langle \textit{tuples of names} \rangle$] ] )).
\end{lstlisting}
In this specification, the $\langle ... \rangle$ are placeholders for the possible arguments:
\begin{itemize} 
  \item \lstinline[basicstyle=\ttfamily, mathescape]|$\langle \textit{output operator} \rangle$|
        $\in \{$ \lstinline[basicstyle=\ttfamily, mathescape]|$\mbox{\textdollar}\mbox{\textdollar}$out1|,
                 \lstinline[basicstyle=\ttfamily, mathescape]|$\mbox{\textdollar}\mbox{\textdollar}$out2|,
                 \lstinline[basicstyle=\ttfamily, mathescape]|$\mbox{\textdollar}\mbox{\textdollar}$out3|,
                 \lstinline[basicstyle=\ttfamily, mathescape]|$\mbox{\textdollar}\mbox{\textdollar}$out4| $\}$,
  \item \lstinline[basicstyle=\ttfamily, mathescape]|$\langle \textit{output aggregate function} \rangle$|
        $\in \{$ \lstinline[basicstyle=\ttfamily, mathescape]|$\mbox{\textdollar}\mbox{\textdollar}$credulous|,
                 \lstinline[basicstyle=\ttfamily, mathescape]|$\mbox{\textdollar}\mbox{\textdollar}$skeptical| $\}$,
  \item \lstinline[basicstyle=\ttfamily, mathescape]|$\langle \textit{formulas} \rangle$|
        is a comma-separated sequence of formulas, and
  \item \lstinline[basicstyle=\ttfamily, mathescape]|$\langle \textit{tuples of names} \rangle$| is a comma-separated sequence
  of tuples that contain names of annotated formulas.
\end{itemize}
The entries 
\lstinline[basicstyle=\ttfamily, mathescape]|$\mbox{\textdollar}\mbox{\textdollar}$throughput|,
\lstinline[basicstyle=\ttfamily, mathescape]|$\mbox{\textdollar}\mbox{\textdollar}$constrained|
\lstinline[basicstyle=\ttfamily, mathescape]|$\mbox{\textdollar}\mbox{\textdollar}$constraints|, and
\lstinline[basicstyle=\ttfamily, mathescape]|$\mbox{\textdollar}\mbox{\textdollar}$preference|
are optional. 
\lstinline[basicstyle=\ttfamily, mathescape]|$\mbox{\textdollar}\mbox{\textdollar}$throughput|
defaults to \lstinline[basicstyle=\ttfamily, mathescape]|$\mbox{\textdollar}$false| (when given no value)
so that the output operators are used without throughput when this parameter is omitted.
Giving no value for \lstinline[basicstyle=\ttfamily, mathescape]|$\mbox{\textdollar}\mbox{\textdollar}$constrained|
means that unconstrained I/O logic reasoning is used.
When using constrained output, the \lstinline[basicstyle=\ttfamily, mathescape]|$\mbox{\textdollar}\mbox{\textdollar}$preference|
parameter specifies the preference relation $\succeq$ between the given norms. Here, all norms within a single tuple are equally
preferred (wrt.\ $\succeq$), and preferred over every norm in any of the following tuples, etc. As an example, an entry of the form
\lstinline[basicstyle=\ttfamily, mathescape]|[[name1,name2],[name3],[name4]]|
specifies a preference relation in which the norm with name
\lstinline[basicstyle=\ttfamily, mathescape]|name1|
is equally preferred as norm \lstinline[basicstyle=\ttfamily, mathescape]|name2|, and both are preferred over
\lstinline[basicstyle=\ttfamily, mathescape]|name3| and \lstinline[basicstyle=\ttfamily, mathescape]|name4|, where in turn
\lstinline[basicstyle=\ttfamily, mathescape]|name3| is preferred over \lstinline[basicstyle=\ttfamily, mathescape]|name4|.
Giving a value for  
\lstinline[basicstyle=\ttfamily, mathescape]|$\mbox{\textdollar}\mbox{\textdollar}$constraints|
has no effect if 
\lstinline[basicstyle=\ttfamily, mathescape]|$\mbox{\textdollar}\mbox{\textdollar}$constrained|
is not set. Equivalently, giving a value for
\lstinline[basicstyle=\ttfamily, mathescape]|$\mbox{\textdollar}\mbox{\textdollar}$preference| has no effect
if  \lstinline[basicstyle=\ttfamily, mathescape]|$\mbox{\textdollar}\mbox{\textdollar}$constrained| is not set.
For convenience, the \emph{pseudo formula}
\lstinline[basicstyle=\ttfamily, mathescape]|$\mbox{\textdollar}\mbox{\textdollar}$input|
can be used as value for 
\lstinline[basicstyle=\ttfamily, mathescape]|$\mbox{\textdollar}\mbox{\textdollar}$constraints|
for directly referring to the input $A$.

The remaining problem file is structured as follows: 
Norms are non-classical formulas of role \emph{axiom} of form \ldots
\begin{lstlisting}[basicstyle=\ttfamily, mathescape]
tff($\langle \textit{name} \rangle$, axiom, {$\mbox{\textdollar}\mbox{\textdollar}$norm} @ ($\langle \textit{body} \rangle$, $\langle \textit{head} \rangle$)).
\end{lstlisting}
and inputs are classical TFF formulas of role \emph{hypothesis}.
Additionally, a problem may contain
an arbitrary number of \emph{conjecture}s that express conjectured outputs.

\section{Implementation \label{sec:impl}} 
The prototypical automated reasoning system
\emph{rio} (for \emph{\underline{r}easoner for \underline{i}nput/\underline{o}utput logic})
implements decision procedures based on the SAT reductions presented above.
It is written in Scala and uses PicoSAT~\citep{DBLP:journals/jsat/Biere08} as a SAT solver backend and
MUST~\citep{bendik2020must} for MUS enumeration.
\emph{rio} is open-source and available under BSD license at GitHub\footnote{\url{https://github.com/aureleeNet/rio}.}
and via Zenodo~\citep{rio}.

\subsection{Application Examples}

\begin{figure}[tb]
  \centering
\begin{lstlisting}[frame=lines,basicstyle=\small\ttfamily,morekeywords={if,for,endfor,endif,then},escapechar=\%]
tff(simple, logic, 
    $$iol == [ $$operator == $$out4 ] ).
    
tff(norm1, axiom, {$$norm} @ (parking, ticket %$\mathtt{\vert}$% fine) ).
tff(norm2, axiom, {$$norm} @ (ticket, pay) ).
tff(norm3, axiom, {$$norm} @ (fine, pay) ).

tff(input1, hypothesis, parking).
\end{lstlisting}
\caption{Problem representation in unconstrained I/O logic without conjectures. The finite basis of
$\out_4(N,A)$, with $A = \{\textit{parking}\}$ and $N = \{ (\textit{parking}, \textit{ticket} \lor \textit{fine}), (\textit{ticket}, \textit{pay}), (\textit{fine}, \textit{pay}) \}$, is calculated. The output is depicted in Fig.~\ref{fig:result}.}
\label{fig:example}
\end{figure}

 \begin{figure}[tb]
     \centering
\begin{lstlisting}[frame=lines,basicstyle=\small\ttfamily,morekeywords={if,for,endfor,endif,then}]
tff(my_spec, logic, 
    $$iol == [ $$operator == $$out3,
               $$constrained == $$skeptical,
               $$constraints == [~ helping] ] ).
              
tff(norm1, axiom, {$$norm} @ ($true, helping) ).
tff(norm2, axiom, {$$norm} @ (helping, telling) ).
tff(norm3, axiom, {$$norm} @ (~helping, ~telling) ).

tff(fact_not_helping, hypothesis, ~helping).

tff(conjecturedOutput1, conjecture, ~telling).
tff(conjecturedOutput1, conjecture, ~helping).
\end{lstlisting}
\caption{Problem representation in constrained I/O logic with two conjectures. The problem is to decide whether
     $\neg\textit{telling} \in \out^C_\cap(N,A)$ and $\neg\textit{helping} \in \out^C_\cap(N,A)$, where $\out = \out_3$,
     $A = \{ \neg\textit{helping} \}$ and $N = \{ (\top, \textit{helping}), (\textit{helping}, \textit{telling}), (\neg\textit{helping}, \neg\textit{telling}) \}$. The output is depicted in Fig.~\ref{fig:result2}. }
\label{fig:example2}
\end{figure}

Two example I/O logic problems are given in Figures~\ref{fig:example} and \ref{fig:example2}.
In the first example, cf.\ Fig.~\ref{fig:example}, three norms are encoded:
$(\mathit{parking}, \mathit{ticket} \lor \mathit{fine})$,
$(\mathit{ticket}, \mathit{pay})$,
$(\mathit{fine}, \mathit{pay})$ that represent the expressions
\emph{if I park my car, I ought to get a parking ticket or get a fine},
\emph{if I get a parking ticket, I ought to pay it}, and
\emph{if I get a fine, I ought to pay it}, respectively.
The problem furthermore specifies the input $\mathit{parking}$.

The second example, cf.\ Fig.~\ref{fig:example2}, encodes Chisholm
paradox as introduced in Example~\ref{example:chisholm}. In the problem statement, two outputs
are conjectured: $\neg\mathit{telling}$ and $\neg\mathit{helping}$.

\rio\ returns results according to the SZS standard~\citep{Sut17}.
There are two main use cases: 
If no conjecture is given, \rio\ will return SZS status
\lstinline[basicstyle=\ttfamily, mathescape]|Success|, generate a finite base of all outputs
and return it as SZS output \lstinline[basicstyle=\ttfamily, mathescape]|ListOfFormulae|.
If the problem contains at least one conjecture, \rio\ will return one SZS status
per conjecture as follows: SZS status 
\lstinline[basicstyle=\ttfamily, mathescape]|Theorem| for conjectured outputs that
are indeed in the output set, and SZS status 
\lstinline[basicstyle=\ttfamily, mathescape]|CounterSatisfiable|
for conjectured outputs that are, in fact, not contained in the output set.

The output given by \rio\ for each of the two examples from Figures~\ref{fig:example} and~\ref{fig:example2}
is displayed at Figures~\ref{fig:result} and~\ref{fig:result2}:
In Fig.~\ref{fig:result}, the elements of the finite basis of the output are given
line-by-line, representing $\mathbf{B}_4(N, A) = \{ \mathit{pay}, \mathit{ticket} \lor \mathit{fine} \}$.
In Fig.~\ref{fig:result2}, each of the conjectures is assessed
by a dedicated SZS status line, indicating that $\neg \mathit{telling} \in \out_\cap^C(N,A)$
and $\neg \mathit{helping} \notin \out_\cap^C(N,A)$.

\begin{figure}[tb]
  \centering
\begin{lstlisting}[frame=lines,basicstyle=\small\ttfamily,morekeywords={if,for,endfor,endif,then},escapechar=?]
% SZS status Success for parking.p
% SZS output start ListOfFormulae for parking.p
ticket ?$\mathtt{\vert}$? fine
pay
% SZS output end ListOfFormulae for parking.p
\end{lstlisting}
\caption{Result for the unconstrained I/O logic problem from Fig.~\ref{fig:example}. The finite basis
of the output set is listed as SZS output, one element per line.}
\label{fig:result}
 \end{figure}

 \begin{figure}[tb]
     \centering
\begin{lstlisting}[frame=lines,basicstyle=\small\ttfamily,morekeywords={if,for,endfor,endif,then},escapechar=?]
% SZS status Theorem for chisholm.p: conjecturedOutput1
% SZS status CounterSatisfiable 
                     for chisholm.p: conjecturedOutput2
\end{lstlisting}
\caption{Result for the constrained I/O logic problem from Fig.~\ref{fig:example2}. For
every conjecture a respective SZS status is printed.}
\label{fig:result2}
\end{figure}

\section{rio: Case Study \label{sec:casestudy}}
The implemented reasoning methods are now applied to a motivating case study originating from a ruling
of the German federal constitutional court (\emph{Bundesverfassungsgericht}, BVerfG) limiting the jurisdictional competence
of the Court of Justice of the European Union (ECJ) in the context of a bond buying programme of the European Central Bank (ECB).\footnote{
Decision of May 5, 2020, 2 BvR 859/15. See \url{http://www.bverfg.de/e/rs20200505_2bvr085915en.html}.}
The ruling illustrates a major difference in the understanding and application of European norms,
in particular the German BVerfG accuses the ECJ of having omitted a key step within its legal process.
However, following the standard reading of legal terminology, the fundamental laws governing the work of
the ECJ prohibit the ECJ from undertake exactly this step. Even more, the German BVerfG has made
it clear in prior rulings that it acknowledges the fundamental principles that underlie this prohibition.
The application of rio for studying this conflict (to some extent) is presented in the following case study.
It is translated and suitably adapted from earlier work~\citep{ASR23}.

A brief summary of the decision's context is as follows: Following a constitutional appeal about the legitimacy of the bond buying programme
of the European Central Bank the BVerfG, within the scope of a so-called preliminary ruling procedure (\emph{Vorabentscheidungsverfahren}), calls upon the ECJ to provide an interpretation of the relevant
treaty law. The ECJ decides that the bond buying programme of the ECB does not conflict with treaty law.
Then, remarkably, the German BVerfG does not follow the ECJ as it contains, according to the BVerfG,
significant methodical errors so that it cannot (and indeed must not) follow its decision. 

In general, the German BVerfG follows German legal methods, whereas it is accepted that
the ECJ works with individual (European) legal methods. While the systems are indeed different, they necessarily have overlapping
concepts such as the differentiation between interpretation of law (an abstract notion) and application of law (a concrete case-oriented notion).
Roughly, law interpretation speaks about how to generally understand the scope of legal norms and their intended meaning. This is
independent from concrete cases, hence considered an abstract-general notion. On the other hand, law applications speaks about, among others,
how to select the correct norms for a concrete case, how to weigh legal goods, and how to produce legal consequences (hence considered a concrete-individual notion). Within the
scope of preliminary ruling procedures, the ECJ is allowed to do the former but not the latter.

A model of the case facts and relevant norms is now constructed in I/O logic such that the decision can be inspected from different perspectives using rio.
In particular, the case study highlights the flexibility of the I/O logic framework for reasoning in
unconstrained, constrained, and preference-based scenarios; and also its uniform representation
in the proposed TPTP-based syntax format.
The use of I/O logic for normative reasoning in multi-agent systems
is advocated, e.g., by Boella and van der Torre~(\citeyear{boella4constitutive}).
For the sake of simplicity, the model here only shows a minimal viable setup as relevant
for the case study.

\subsection{Case facts}

The propositional atoms used as vocabulary for representing the case facts and the relevant norms are displayed and described in Table~\ref{table:casestudy:atoms}. Using these atoms, the case facts are represented as input $A = \{\eqref{fact:a1}, \ldots, \eqref{fact:a7} \}$ as follows:

\begin{table}[tb]
\centering
\begin{tabular}{p{.38\textwidth}|p{.55\textwidth}}
\multicolumn{1}{c|}{\textbf{Propositional atom}} & \multicolumn{1}{c}{\textbf{Short description}} \\
\hline
$\mathit{abstractAssessmentECJ}$ & 
The ECJ applies a abstract-general assessment.
\\ \hline
$\mathit{concreteAssessmentECJ}$ & 
The ECJ applies a concrete-individual assessment.
\\ \hline
$\mathit{interpretationECJ}$ & 
The ECJ interpretes law.
\\ \hline
$\mathit{applicationECJ}$ & 
The ECJ applies law.
\\ \hline
$\mathit{methodicallySoundECJ}$ & 
The work on the ECJ deemed methodically sound/correct.
\\ \hline
$\mathit{ultraViresECJ}$ & 
The ECJ acts ultra vires.
\\ \hline
$\mathit{prelimRulingProcECJ}$ & 
The ECJ conducts a preliminary ruling procedure.
\\ \hline
$\mathit{assessmentOfAdequacyECJ}$ & 
The ECJ conducts an assessment of adequacy.
\\ \hline
$\mathit{assessmentOfProportionalityECJ}$ & 
The ECJ conducts an assessment of proportionality.
\\ \hline
$\mathit{ECBBondBuyingDecision}$ & 
The case at hand is the ECB bond buying decision (subject of the case study).
\\ \hline
$\mathit{BVerfGFollowsECJ}$ & 
The BVerfG follows the ECJ's decision.
\\ \hline
\end{tabular}
\caption{Symbols used for representing the facts and norms of the case study.}
\label{table:casestudy:atoms}
\end{table}

\begin{equation} \tag{A1} \label{fact:a1}
\neg (\textit{abstractAssessmentECJ} \Leftrightarrow \textit{concreteAssessmentECJ})
\end{equation}
A fundamental requirement of the European legal system is that models of law application (subsumption) and models of
law interpretation must differentiate between these two legal concepts.\footnote{
See, e.g., article 267 of the Treaty on the Functioning of the European Union (TFEU) which dictates
that the ECJ may only interpret but not apply law in the context of a preliminary ruling procedure.}
Input \eqref{fact:a1}
maintains this difference by stating that abstract-general assessments must be strictly separated from concrete-individual assessments.

Law interpretation and law application is then defined as an instance of an abstract-general assessment and a concrete-individual
assessment, respectively, by \eqref{fact:a2} and \eqref{fact:a3}.

\begin{align} 
\textit{interpretationECJ} \Rightarrow \textit{abstractAssessmentECJ}   \tag{A2}\label{fact:a2} \\
\textit{applicationECJ} \Rightarrow \textit{concreteAssessmentECJ}   \tag{A3}\label{fact:a3}
\end{align}

A so-called assessment of adequacy is a concrete-individual assessment because it involves law application,
yielding input \eqref{fact:a4}.
\begin{equation} \tag{A4} \label{fact:a4}
\textit{assessmentOfAdequacyECJ} \Rightarrow \neg\textit{abstractAssessmentECJ}
\end{equation}

In the case decision, the BVerfG states: \emph{''This view [...] is no longer tenable from a methodological perspective [...]. [T]he ECJ Judgment itself constitutes an ultra vires act and thus has no binding effect [in Germany].''}\footnote{See Rd. 119 in 2 BvR 859/15, \url{https://www.bverfg.de/e/rs20170718_2bvr085915}.}
An ultra vires act is an act that was executed outside of the scope of one's own competencies (i.e., not having a legal basis).
The conditional that if the ECJ decision is not sound from a methodological perspective, it constitutes an ultra vires act is encoded by \eqref{fact:a5}.
\begin{equation} \tag{A5} \label{fact:a5}
\neg \textit{methodicallySoundECJ} \Rightarrow \textit{ultraViresECJ}
\end{equation}

The relevant ruling of the ECJ that is not followed by the BVerfG is enacted in the scope of a preliminary ruling procedure,
this is formalized by \eqref{fact:a6}.

\begin{equation} \tag{A6} \label{fact:a6}
\textit{prelimRulingProcECJ} \land \textit{ECBBondBuyingDecision}
\end{equation}

For that ruling, an assessment of proportionality was conducted by the ECJ, but an assessment of adequacy was omitted, represented
by \eqref{fact:a7}.
\begin{equation} \tag{A7} \label{fact:a7}
\textit{assessmentOfProportionalityECJ} \land \neg\textit{assessmentOfAdequacyECJ}
\end{equation}

\subsection{Considered Norms}
The relevant norms for the case study are extracted from the relevant legal norms, from implicit knowledge from legal methodology, and from case
decisions. The norm set $N = \{\eqref{norm:n1}, \ldots, \eqref{norm:n9}\}$ is constructed as follows:

If the ECJ's decision is ultra vires, the national court must not follow
it \eqref{norm:n1}; but in general it has to follow the ruling of the ECJ \eqref{norm:n2}.

\begin{align} 
(\textit{ultraViresECJ} &, \neg\textit{BVerfGFollowsECJ})   \tag{N1}\label{norm:n1} \\
(\top &, \textit{BVerfGFollowsECJ}) \tag{N2}\label{norm:n2}
\end{align}

The ECJ may only interpret law in the scope
of a preliminary ruling procedure; but it may not apply law.\footnote{According to article 267 TFEU.}
This is
encoded by norms \eqref{norm:n3} and \eqref{norm:n4}, respectively.
\begin{align} 
(\textit{prelimRulingProcECJ} &, \textit{interpretationECJ})   \tag{N3}\label{norm:n3} \\
(\textit{prelimRulingProcECJ} &, \neg\textit{applicationECJ}) \tag{N4}\label{norm:n4}
\end{align}

It is a usual legal requirement to respect the principle of proportionality\footnote{See also Rd. 123 in 2 BvR 859/15.},
so an assessment of proportionality is required as part of a 
preliminary ruling procedure~\eqref{norm:n5}.
\begin{equation} \tag{N5} \label{norm:n5}
(\textit{prelimRulingProcECJ}, \textit{assessmentOfProportionalityECJ})
\end{equation}

According to the BVerfG, the ECJ's omission of an assessment of adequacy
is a methodological error.\footnote{
See Rd. 119 in 2 BvR 859/15: \emph{''This view [...] is no longer tenable from a methodological perspective given that it completely disregards the actual effects of the PSPP''}.
}
As a consequence, an assessment of adequacy by the ECJ within the scope of a preliminary ruling procedure is, in general, methodologically sound~\eqref{norm:n6}.
\begin{equation} \tag{N6} \label{norm:n6} \small
(\textit{prelimRulingProcECJ} \land \textit{assessmentOfAdequacyECJ}, \textit{methodicallySoundECJ})
\end{equation}
In particular, its omission in the concrete case constitutes a methodological error~\eqref{norm:n7}.
\begin{equation} \tag{N7} \label{norm:n7} 
\small
\begin{split} \small
(\textit{prelimRulingProcECJ} \land \textit{ECBBondBuyingDecision} &\\
\land \neg\textit{assessmentOfAdequacyECJ}&, \neg\textit{methodicallySoundECJ})
\end{split}
\end{equation}

In contrast, the ECJ deems its ruling (without an assessment of adequacy)
to be methodically sound~\eqref{norm:n8}. 

\begin{equation} \tag{N8} \label{norm:n8} 
\small
\begin{split} \small
(\textit{prelimRulingProcECJ} \land \textit{ECBBondBuyingDecision} &\\
\land \neg\textit{assessmentOfAdequacyECJ}&, \textit{methodicallySoundECJ})
\end{split}
\end{equation}

In other cases the ECJ does, however, conduct assessments of adequacy within the scope of preliminary ruling procedures. So this procedure is also methodically sound from the ECJ's perspective~\eqref{norm:n9}.

\begin{equation} \tag{N9} \label{norm:n9} 
\small
\begin{split} \small
(\textit{prelimRulingProcECJ} \land \textit{assessmentOfAdequacyECJ}&, \textit{methodicallySoundECJ})
\end{split}
\end{equation}

Note that norm \eqref{norm:n9} is actually identical to \eqref{norm:n6}, but it
is kept anyway for transparency reasons of the presentation.

\subsection{Assessment using \rio}

It is easy to see that the set $N$ from above contains conflicting norms,
some of which apply in the situation described by $A$.
Indeed \rio\ verifies that $\out_3(N,A) = \cn(\bot)$, i.e. that I/O logic without constraints will not give
meaningful answers:

\begin{Verbatim}[frame=single]
% SZS status Success
% SZS output start ListOfFormulae
  $false
% SZS output end ListOfFormulae
\end{Verbatim}
The input problem for this assessment is displayed in Appendix~\ref{appendix:file1}.

In a constrained setting (without preferences between the individual norms) it holds that
\begin{equation*}\begin{split}
\mathit{maxfamily}_{\out_3}(N,A,C) = \Big\{ &\big\{\eqref{norm:n1}, \eqref{norm:n2}, \eqref{norm:n3}, \eqref{norm:n4}, \eqref{norm:n5}, \eqref{norm:n6}, \eqref{norm:n8}, \eqref{norm:n9}\big\},\\ &\big\{\eqref{norm:n2}, \eqref{norm:n3}, \eqref{norm:n4}, \eqref{norm:n5}, \eqref{norm:n6}, \eqref{norm:n7}, \eqref{norm:n9}\big\}, \\ & \big\{\eqref{norm:n1}, \eqref{norm:n3}, \eqref{norm:n4}, \eqref{norm:n5}, \eqref{norm:n6}, \eqref{norm:n7}, \eqref{norm:n9}\big\} \Big\}
\end{split}\end{equation*}
for $C = A$.\footnote{In this case study only the case of $C = A$, i.e., output consistent with the situation ($A$), is assessed. Also, the underlying output operator is $\out_3$ for all queries. The outputs primarily rely on reusable output, so using $\out_4$ does not change the results; when using $\out_1$ or $\out_2$ however, some outputs are lost.}

Intuitively, the first set of norms in this maxfamily describes a possible point of view from the perspective of the ECJ;
the second and third those of the BVerfG. In the latter case, there exists a natural conflict between norm~\eqref{norm:n1} and
norm~\eqref{norm:n2}, resulting in two different sets for the BVerfG.
The skeptical output ${\out_3}_{\cap}^C(N,A)$ is then given by
\begin{equation*}\begin{split}
{\out_3}_{\cap}^C(N,A) = &\cn(\textit{assessmentOfProportionalityECJ}, \\ &\textit{interpretationECJ}, 
\neg\textit{applicationECJ}, \\
&\textit{methodicallySoundECJ} \Rightarrow \textit{BVerfGFollowsECJ})
\end{split}\end{equation*}
as witnessed by \rio's output (produced by the input problem displayed in Appendix~\ref{appendix:file2}):
\begin{Verbatim}[frame=single]
% SZS status Success
% SZS output start ListOfFormulae
  assessmentOfProportionalityECJ
  interpretationECJ
  ~applicationECJ
  ~methodicallySoundECJ | bVerfGFollowsECJ
% SZS output end ListOfFormulae
\end{Verbatim}
An interesting observation is that, according to the constrained output,
there is no obligation of the BVerfG to follow or not to follow the ECJ's decision.
It is only stated that the BVerfG must follow the ECJ if its decision is methodically sound.

This result is quite interesting as no clear solution of the decision problem (or a verification of the BVerfG's decision)
can be inferred.
From a logical perspective, of course, this is quite trivial to see: Every norm is treated equally wrt.\
speciality and/or preference, so usual legal principles such as \emph{lex specialis derogat legi generali}
(more specialized norms defeat general norms) are not respected.

Because of the incompatible view points, the credulous output is inconsistent,
${\out_3}_{\cup}^C(N,A) = \cn(\bot)$ , as verified by \rio\ (not shown here).

In order to accommodate a more fine-grained analysis of the case, and to assess the situation from
different view points, the formalization is now augmented with
a preference ordering between the norms in order to allow for \emph{preferred output}~\citep{DBLP:journals/ail/Parent11},
as briefly introduced further above.

Following \emph{lex specialis derogat legi generali} it seems meaningful to assume that
$\eqref{norm:n1} \succ \eqref{norm:n2}$.
Further assuming that the norms originating from the different viewpoints are equally preferred,
i.e., $\eqref{norm:n6} \simeq \eqref{norm:n7} \simeq \eqref{norm:n8} \simeq \eqref{norm:n9}$
where $\simeq$ denotes an equal preference between the involved norms, the preferred output becomes
(as witnessed by \rio, not shown here, but see Appendix~\ref{appendix:file3} for the problem file):
\begin{equation*}\begin{split}
  {\out_3}_{\cap}^{C,\succeq^*}(N,A) = &\cn(\textit{assessmentOfProportionalityECJ}, \\ &\textit{interpretationECJ}, 
\neg\textit{applicationECJ}, \\
&\textit{methodicallySoundECJ} \Leftrightarrow \textit{BVerfGFollowsECJ})
\end{split}\end{equation*}
This output strengthens the skeptical output (without preferences) in the sense that
the BVerfG must follow the ECJ if and only if the decision of the ECJ is methodically sound
(as opposed to a simple implication). This is because the two conflicting sets in the maxfamily that are ascribed to the BVerfG's point of view
are now reduced to one set due to the preference relation.
Still, it is not possible to conclude any argument for or against one of the two different positions.

In contrast, if it is assumed that
$\eqref{norm:n6} \simeq \eqref{norm:n7}$ and $\eqref{norm:n8} \simeq \eqref{norm:n9}$, while 
$\eqref{norm:n6} \succ \eqref{norm:n8}$, $\eqref{norm:n6} \succ \eqref{norm:n9}$, 
$\eqref{norm:n7} \succ \eqref{norm:n8}$, and $\eqref{norm:n7} \succ \eqref{norm:n9}$, 
the position of the BVerfG is preferred to that of the ECJ. Consequently, this gives \ldots
\begin{equation*}\begin{split}
  {\out_3}_{\cap}^{C,\succeq^*}(N,A) = &\cn(\textit{assessmentOfProportionalityECJ}, \\ &\textit{interpretationECJ}, 
\neg\textit{applicationECJ}, \\
&\neg\textit{methodicallySoundECJ}, \neg\textit{BVerfGFollowsECJ})
\end{split}\end{equation*}
In this case, an obligation is detached that the BVerfG must not follow the ECJ's decision, and also that
the work on the ECJ must be considered methodically unsound (see Appendix~\ref{appendix:file4} for the problem file).

On the other hand, if the position of the ECJ is preferred
($\eqref{norm:n8} \succ \eqref{norm:n6}$, $\eqref{norm:n9} \succ \eqref{norm:n6}$, 
$\eqref{norm:n8} \succ \eqref{norm:n7}$, and $\eqref{norm:n9} \succ \eqref{norm:n7}$), a different
result is produced:
\begin{equation*}\begin{split}
  {\out_3}_{\cap}^{C,\succeq^*}(N,A) = &\cn(\textit{assessmentOfProportionalityECJ}, \\ &\textit{interpretationECJ}, 
\neg\textit{applicationECJ}, \\
&\textit{methodicallySoundECJ}, \textit{BVerfGFollowsECJ})
\end{split}\end{equation*}
Here, according to the position of the ECJ, the BVerfG has to follow its ruling, and its work has to be
considered methodically sound (see Appendix~\ref{appendix:file5} for the problem file).

The preference relation is provided externally (e.g., by the user of \rio\ or by argumentation systems) so that a
conclusive answer of the question (whether the BVerfG should have followed the ECJ's decision or not) remains
unanswered. Still, it is shown that \rio\ can be used to assess different normative setups to allow for
a computer-assisted analysis of non-trivial cases.

Note that outputs such as 
$\neg \textit{methodicallySoundECJ} \lor \textit{bVerfGFollowsECJ}$ and
$\textit{methodicallySoundECJ} \Leftrightarrow \textit{BVerfGFollowsECJ}$, as generated above,
are not heads of any norm involved in the inference process. This further motivates the usage of the
\problem\ problem as opposed to the entailment problem for output discovery, in particular when
the set of norms grows or the structure of outputs are non-trivial (e.g., because of preference-based output).

\section{Conclusion \label{sec:conclusion}}

In this paper a reduction of unconstrained and constrained I/O logic reasoning
to a sequence of SAT problems is presented. As opposed to earlier work of the author that
focused on the entailment problem in I/O logics, this work addresses the \problem\ problem that
asks for a finite set of formulas, called a basis,
from which every generated output is entailed. The reduction is 
sound and complete with respect to I/O logic semantics.

For a given normative system
$N$ and input $A$, the basis of the output set needs to be computed only once.
Subsequently, conjectured outputs can be checked for entailment from the basis using classical entailment
procedures (e.g., SAT solvers). Advantages of this approach are the following:
\begin{itemize}
  \item It provides a uniform automation approach for unconstrained and constrained I/O logics (with and without throughout),
        as well as further extensions resp.\ generalizations (see below).
  \item It makes explicit the structure of the set of entailed formulas. This information can, e.g.,
  be used to construct an explanation of why some specific formula is entailed.
  \item Next to transparency reasons, the \problem\ problem allows for a principled assessment
  of obligations in force. In particular, in complex scenarios it may not be clear which concrete obligation
  to check for entailment.
\end{itemize}
A notable disadvantage is that, if interested in the entailment problem only, the proposed method 
comes with a higher time and space complexity than direct methods, and so will provide less efficient
means of automation.

Furthermore, using an generalized net output operator $\gout$
the presented reasoning framework is general enough to support extensions of constrained I/O logic reasoning,
in particular preference-based I/O logic reasoning. Also, other forms of norm set selection
and output aggregation can easily be built on top of the existing methods.

The reasoning procedures have been implemented as a new normative reasoning software
called \rio\ that connects to the TPTP standard for automated theorem proving systems.
For this purpose, a representation format for I/O logic reasoning based on the 
TPTP NXF language standard is presented. The implementation is applied to a motivating case study,
illustrating the automation provided by \rio.
The procedures and their implementation in \rio\ constitute a building block towards (semi-)automated
normative reasoning systems, e.g., for compliance checking and
legal assessment.

\paragraph*{Further Work.}
An extensive practical evaluation of the approach and implementation is further work. This is because
there do not exist other automation systems for I/O logics, and only few small benchmarks. The build-up
of a representative collection of benchmark problems for I/O logic reasoning is a medium- and long-term goal.

Proof reconstruction for unconstrained I/O logic reasoning can be achieved by storing the set of (weakly) triggered set of norms, and the associated set of inputs,
for each individual output as meta information during the procedure. This is mainly engineering work, but will enable independent (external) proof verification
of results produced by \rio\ based on the sound and complete derivation rules of Makinson and van der Torre (\citeyear{DBLP:journals/jphil/MakinsonT00}).

I/O logics have been employed in the context of studying
conditional permissions~\citep{DBLP:journals/jphil/MakinsonT03} and it seems fruitful to integrate
both obligations and permissions within one automated reasoning framework.  
Recent work addresses weaker notions of I/O logic that allow for a fined-grained control over
employed inference principles~\citep{DBLP:conf/deon/ParentT18}. An integration of these techniques
remains further work.

The SAT-based approach seems fit to allow automated reasoning using so-called I/O logics without weakening~\citep{DBLP:conf/deon/ParentT14},
a generalization of the output operators discussed here. However, an implementation is not available yet.

I/O logics based on first-order languages have been studied in the context of
legal knowledge bases for privacy analysis with respect to the GDPR~\citep{robaldo2019formalizing}. 
Automating such a quantified language in the I/O framework for employment in practical normative reasoning
is the main motivation of this work. As a consequence, the next step is to generalize the SAT-based
encoding towards first-order reasoning using Satisfiability Modulo Theory (SMT) solvers~\citep{DBLP:series/faia/BarrettSST21}
instead of SAT solvers.
In fact, the MUS enumeration tool used by \rio\ already supports MUS extraction from SMT solvers.

\section*{Acknowledgments}
The author would like to thank the anonymous reviewers for their helpful comments and improvements.

\section*{Disclosure statement}
The author report there are no competing interests to declare.

\section*{Data availability statement}
The rio software version 1.3 is available at Zenodo via \url{https://doi.org/10.5281/zenodo.18757637}.

\bibliographystyle{apacite}
\bibliography{main}

\newpage
\appendix
\section{Input Problem Unconstrained Output} \label{appendix:file1}

\begin{Verbatim}[frame=single,fontsize=\footnotesize]
% I/O logic semantics specification
tff(semantics, logic, $$iol == [ $$operator == $$out3 ] ).

% Declaration of norms
tff(n1, axiom, {$$norm} @ (ultraViresECJ, ~bVerfGFollowsECJ) ).
tff(n2, axiom, {$$norm} @ ($true, bVerfGFollowsECJ) ).
tff(n3, axiom, {$$norm} @ (prelimRulingProcECJ, interpretationECJ) ).
tff(n4, axiom, {$$norm} @ (prelimRulingProcECJ, ~applicationECJ) ).
tff(n5, axiom, {$$norm} @ (prelimRulingProcECJ, assessmentOfProportionalityECJ) ).
tff(n9, axiom, {$$norm} @ (prelimRulingProcECJ & assessmentOfAdequacyECJ,
                           methodicallySoundECJ) ).
tff(n8, axiom, {$$norm} @ (prelimRulingProcECJ & ~assessmentOfAdequacyECJ &
                             ecbBondBuyingDecision, methodicallySoundECJ) ).
tff(n6, axiom, {$$norm} @ (prelimRulingProcECJ & assessmentOfAdequacyECJ,
                           methodicallySoundECJ) ).
tff(n7, axiom, {$$norm} @ (prelimRulingProcECJ & ~assessmentOfAdequacyECJ &
                             ecbBondBuyingDecision, ~methodicallySoundECJ) ). 

% Declaration of inputs
tff(a61, hypothesis, prelimRulingProcECJ).
tff(a71, hypothesis, assessmentOfProportionalityECJ).
tff(a72, hypothesis, ~assessmentOfAdequacyECJ).
tff(a62, hypothesis, ecbBondBuyingDecision).
tff(a2, hypothesis, interpretationECJ => abstractAssessmentECJ).
tff(a3, hypothesis, applicationECJ => concreteAssessmentECJ).
tff(a1, hypothesis, ~(abstractAssessmentECJ <=> concreteAssessmentECJ)).
tff(a4, hypothesis, assessmentOfAdequacyECJ => ~abstractAssessmentECJ).
tff(a5, hypothesis, ~methodicallySoundECJ => ultraViresECJ).
\end{Verbatim}

\newpage
\section{Input Problem Constrained Output} \label{appendix:file2}
\begin{Verbatim}[frame=single,fontsize=\footnotesize]
% I/O logic semantics specification
tff(semantics, logic, $$iol == [ $$operator == $$out3, 
                                 $$constrained == $$skeptical,
                                 $$constraints == $$input ] ).

% Declaration of norms
tff(n1, axiom, {$$norm} @ (ultraViresECJ, ~bVerfGFollowsECJ) ).
tff(n2, axiom, {$$norm} @ ($true, bVerfGFollowsECJ) ).
tff(n3, axiom, {$$norm} @ (prelimRulingProcECJ, interpretationECJ) ).
tff(n4, axiom, {$$norm} @ (prelimRulingProcECJ, ~applicationECJ) ).
tff(n5, axiom, {$$norm} @ (prelimRulingProcECJ, assessmentOfProportionalityECJ) ).
tff(n9, axiom, {$$norm} @ (prelimRulingProcECJ & assessmentOfAdequacyECJ,
                           methodicallySoundECJ) ).
tff(n8, axiom, {$$norm} @ (prelimRulingProcECJ & ~assessmentOfAdequacyECJ &
                             ecbBondBuyingDecision, methodicallySoundECJ) ).
tff(n6, axiom, {$$norm} @ (prelimRulingProcECJ & assessmentOfAdequacyECJ,
                           methodicallySoundECJ) ).
tff(n7, axiom, {$$norm} @ (prelimRulingProcECJ & ~assessmentOfAdequacyECJ &
                             ecbBondBuyingDecision, ~methodicallySoundECJ) ). 

% Declaration of inputs
tff(a61, hypothesis, prelimRulingProcECJ).
tff(a71, hypothesis, assessmentOfProportionalityECJ).
tff(a72, hypothesis, ~assessmentOfAdequacyECJ).
tff(a62, hypothesis, ecbBondBuyingDecision).
tff(a2, hypothesis, interpretationECJ => abstractAssessmentECJ).
tff(a3, hypothesis, applicationECJ => concreteAssessmentECJ).
tff(a1, hypothesis, ~(abstractAssessmentECJ <=> concreteAssessmentECJ)).
tff(a4, hypothesis, assessmentOfAdequacyECJ => ~abstractAssessmentECJ).
tff(a5, hypothesis, ~methodicallySoundECJ => ultraViresECJ).
\end{Verbatim}
\newpage

\section{Input Problem Preferred Output \#1} \label{appendix:file3}
\begin{Verbatim}[frame=single,fontsize=\footnotesize]
% I/O logic semantics specification
tff(semantics, logic, $$iol == [ $$operator == $$out3, 
                                 $$constrained == $$skeptical,
                                 $$preference == [[n6, n7, n8, n9], n1, n2],
                                 $$constraints == $$input ] ).

% Declaration of norms
tff(n1, axiom, {$$norm} @ (ultraViresECJ, ~bVerfGFollowsECJ) ).
tff(n2, axiom, {$$norm} @ ($true, bVerfGFollowsECJ) ).
tff(n3, axiom, {$$norm} @ (prelimRulingProcECJ, interpretationECJ) ).
tff(n4, axiom, {$$norm} @ (prelimRulingProcECJ, ~applicationECJ) ).
tff(n5, axiom, {$$norm} @ (prelimRulingProcECJ, assessmentOfProportionalityECJ) ).
tff(n9, axiom, {$$norm} @ (prelimRulingProcECJ & assessmentOfAdequacyECJ,
                           methodicallySoundECJ) ).
tff(n8, axiom, {$$norm} @ (prelimRulingProcECJ & ~assessmentOfAdequacyECJ &
                             ecbBondBuyingDecision, methodicallySoundECJ) ).
tff(n6, axiom, {$$norm} @ (prelimRulingProcECJ & assessmentOfAdequacyECJ,
                           methodicallySoundECJ) ).
tff(n7, axiom, {$$norm} @ (prelimRulingProcECJ & ~assessmentOfAdequacyECJ &
                             ecbBondBuyingDecision, ~methodicallySoundECJ) ). 

% Declaration of inputs
tff(a61, hypothesis, prelimRulingProcECJ).
tff(a71, hypothesis, assessmentOfProportionalityECJ).
tff(a72, hypothesis, ~assessmentOfAdequacyECJ).
tff(a62, hypothesis, ecbBondBuyingDecision).
tff(a2, hypothesis, interpretationECJ => abstractAssessmentECJ).
tff(a3, hypothesis, applicationECJ => concreteAssessmentECJ).
tff(a1, hypothesis, ~(abstractAssessmentECJ <=> concreteAssessmentECJ)).
tff(a4, hypothesis, assessmentOfAdequacyECJ => ~abstractAssessmentECJ).
tff(a5, hypothesis, ~methodicallySoundECJ => ultraViresECJ).
\end{Verbatim}
\newpage

\section{Input Problem Preferred Output \#2} \label{appendix:file4}
\begin{Verbatim}[frame=single,fontsize=\footnotesize]
% I/O logic semantics specification
tff(semantics, logic, $$iol == [ $$operator == $$out3, 
                                 $$constrained == $$skeptical,
                                 $$preference == [[n6, n7], [n8, n9], n1, n2],
                                 $$constraints == $$input ] ).

% Declaration of norms
tff(n1, axiom, {$$norm} @ (ultraViresECJ, ~bVerfGFollowsECJ) ).
tff(n2, axiom, {$$norm} @ ($true, bVerfGFollowsECJ) ).
tff(n3, axiom, {$$norm} @ (prelimRulingProcECJ, interpretationECJ) ).
tff(n4, axiom, {$$norm} @ (prelimRulingProcECJ, ~applicationECJ) ).
tff(n5, axiom, {$$norm} @ (prelimRulingProcECJ, assessmentOfProportionalityECJ) ).
tff(n9, axiom, {$$norm} @ (prelimRulingProcECJ & assessmentOfAdequacyECJ,
                           methodicallySoundECJ) ).
tff(n8, axiom, {$$norm} @ (prelimRulingProcECJ & ~assessmentOfAdequacyECJ &
                             ecbBondBuyingDecision, methodicallySoundECJ) ).
tff(n6, axiom, {$$norm} @ (prelimRulingProcECJ & assessmentOfAdequacyECJ,
                           methodicallySoundECJ) ).
tff(n7, axiom, {$$norm} @ (prelimRulingProcECJ & ~assessmentOfAdequacyECJ &
                             ecbBondBuyingDecision, ~methodicallySoundECJ) ). 

% Declaration of inputs
tff(a61, hypothesis, prelimRulingProcECJ).
tff(a71, hypothesis, assessmentOfProportionalityECJ).
tff(a72, hypothesis, ~assessmentOfAdequacyECJ).
tff(a62, hypothesis, ecbBondBuyingDecision).
tff(a2, hypothesis, interpretationECJ => abstractAssessmentECJ).
tff(a3, hypothesis, applicationECJ => concreteAssessmentECJ).
tff(a1, hypothesis, ~(abstractAssessmentECJ <=> concreteAssessmentECJ)).
tff(a4, hypothesis, assessmentOfAdequacyECJ => ~abstractAssessmentECJ).
tff(a5, hypothesis, ~methodicallySoundECJ => ultraViresECJ).
\end{Verbatim}
\newpage
\section{Input Problem Preferred Output \#3} \label{appendix:file5}
\begin{Verbatim}[frame=single,fontsize=\footnotesize]
% I/O logic semantics specification
tff(semantics, logic, $$iol == [ $$operator == $$out3, 
                                 $$constrained == $$skeptical,
                                 $$preference == [[n8, n9], [n6, n7], n1, n2],
                                 $$constraints == $$input ] ).

% Declaration of norms
tff(n1, axiom, {$$norm} @ (ultraViresECJ, ~bVerfGFollowsECJ) ).
tff(n2, axiom, {$$norm} @ ($true, bVerfGFollowsECJ) ).
tff(n3, axiom, {$$norm} @ (prelimRulingProcECJ, interpretationECJ) ).
tff(n4, axiom, {$$norm} @ (prelimRulingProcECJ, ~applicationECJ) ).
tff(n5, axiom, {$$norm} @ (prelimRulingProcECJ, assessmentOfProportionalityECJ) ).
tff(n9, axiom, {$$norm} @ (prelimRulingProcECJ & assessmentOfAdequacyECJ,
                           methodicallySoundECJ) ).
tff(n8, axiom, {$$norm} @ (prelimRulingProcECJ & ~assessmentOfAdequacyECJ &
                             ecbBondBuyingDecision, methodicallySoundECJ) ).
tff(n6, axiom, {$$norm} @ (prelimRulingProcECJ & assessmentOfAdequacyECJ,
                           methodicallySoundECJ) ).
tff(n7, axiom, {$$norm} @ (prelimRulingProcECJ & ~assessmentOfAdequacyECJ &
                             ecbBondBuyingDecision, ~methodicallySoundECJ) ). 

% Declaration of inputs
tff(a61, hypothesis, prelimRulingProcECJ).
tff(a71, hypothesis, assessmentOfProportionalityECJ).
tff(a72, hypothesis, ~assessmentOfAdequacyECJ).
tff(a62, hypothesis, ecbBondBuyingDecision).
tff(a2, hypothesis, interpretationECJ => abstractAssessmentECJ).
tff(a3, hypothesis, applicationECJ => concreteAssessmentECJ).
tff(a1, hypothesis, ~(abstractAssessmentECJ <=> concreteAssessmentECJ)).
tff(a4, hypothesis, assessmentOfAdequacyECJ => ~abstractAssessmentECJ).
tff(a5, hypothesis, ~methodicallySoundECJ => ultraViresECJ).
\end{Verbatim}

\end{document}